\numberwithin{equation}{section}
\newtheorem{theorem}{Theorem}[section]
\newtheorem{proposition}{Proposition}[section]
\newtheorem{lemma}[proposition]{Lemma}
\newtheorem{corollary}[proposition]{Corollary}
\theoremstyle{definition}
\newtheorem*{remark}{Remark}
\newcommand{\R}{\mathbb{R}}
\newcommand{\g}{\mathfrak{g}}
\newcommand{\A}{\mathcal{A}}
\newcommand{\su}[1]{\mathfrak{su}(#1)}
\newcommand{\so}[1]{\mathfrak{so}(#1)}
\newcommand{\M}{\mathcal{M}}
\newcommand{\T}{\mathrm{T}}
\newcommand{\OF}[1]{P^{SO}(#1)}
\newcommand{\SP}[1]{P^{Spin}(#1)}
\newcommand{\Ad}{\operatorname{Ad}}
\begin{document}

\title{Moduli space of spin connections on three-dimensional homogeneous spaces}

\author[1,3]{Matteo Bruno\footnote{\texttt{matteo.bruno@uniroma1.it }}}
\author[2]{Gabriele Peluso\footnote{\texttt{gabriele.peluso@uniroma1.it }}}

\affil[1]{\textit{\normalsize{Physics Department, Sapienza University of Rome, P.za Aldo Moro 5, Rome, 00185, , Italy}}}
\affil[2]{\textit{\normalsize{Math Department, Sapienza University of Rome, P.za Aldo Moro 5, Rome, 00185, , Italy}}}
\affil[3]{\textit{\normalsize{INFN, Sezione di Roma 1, P.le Aldo Moro 2, 00185, Rome, Italy}}}
\date{}

\maketitle

\begin{abstract}
In this manuscript, we aim to classify and characterize the moduli space of homogeneous spin connections and homogeneous $SU(2)$ connections on three-dimensional Riemannian homogeneous spaces. An analysis of the topology of the associated moduli spaces reveals that they are finite-dimensional topological manifolds (possibly with boundary) possessing trivial homotopy groups.

\medskip

Owing to their deep connection with cosmological models in the\\ Ashtekar-Barbero-Immirzi formulation of General Relativity, this study offers a mathematically rigorous interpretation of the Ashtekar-Barbero-Immirzi-Sen connection within a cosmological context. In particular, we show that a correct formulation of the theory relies crucially on identifying the moduli space of homogeneous spin connections, thereby emphasizing the essential role of the spin structure in ensuring consistency with the physical content of the theory.

\medskip

The favorable topological properties of these moduli spaces circumvent many of the usual difficulties associated with singularities and the definition of regular measures in Quantum Field Theory and Quantum Gravity. As a result, they provide a solid foundation for the rigorous implementation of quantum theory in the cosmological setting.
\end{abstract}

\section{Introduction}
\label{sec:Intro}

The study of 3-dimensional Riemannian homogeneous spaces has been a significant topic in mathematics due to its connection with Thurston's geometrization conjecture \cite{Thurston_1982}. These spaces also play a major role in General Relativity, as they are used to describe cosmological models, thereby contributing to the development of Cosmology within mathematical General Relativity, in accordance with the cosmological principle \cite{Landau_Lifshits_1975}. The link between these two fields is strong: the geometries in Thurston's conjecture correspond to the most relevant cosmological models, commonly known as class A models \cite{Milnor_1976,Patrangenaru_1996}.

\medskip

In this work, we aim to clarify the relationship between 3-dimensional Riemannian homogeneous spaces and a particular formulation of General Relativity, known as the Ashtekar-Barbero-Immirzi formulation \cite{Ashtekar_1986,Ashtekar_1987,Barbero_1995,Immirzi_1997}. In this formulation, the usual initial data set $(M, g, k)$, where $M$ is a Riemannian 3-manifold, $g$ is the Riemannian metric on $M$, and $k$ is a symmetric $(0,2)$-tensor, is replaced by $(M, g, \omega)$, where $M$ is a spin manifold and $\omega$ is a spin connection. The formulation also establishes a correspondence between the two datasets \cite{Bruno_2024a}. In this way, General Relativity takes the form of an $SU(2)$ gauge theory.

\medskip

While the implications of homogeneity for $(M, g, k)$ are well established, the extension of this property to $(M, g, \omega)$ remains a topic of debate, particularly regarding the meaning of homogeneity for $\omega$. A recent work \cite{Bruno_2024b} demonstrated that interpreting $\omega$ as a homogeneous spin connection (via Wang's theorem \cite{Wang_1958}) yields the standard notion of a homogeneous $k$. An earlier work \cite{Bojowald_Kastrup_2000} proposed interpreting $\omega$ as an homogeneous $SU(2)$ connection. While in the non-homogeneous case, spin connections and $SU(2)$ connections are equivalent on a spin 3-manifold, this equivalence does not hold in the homogeneous case. We will demonstrate this by classifying both types of connections and comparing the results with those of the physical theory. Furthermore, we compute the respective moduli spaces for each 3-dimensional homogeneous space.

\medskip

The moduli space is a critical object in the quantization of gauge field theories. Typically, moduli spaces exhibit certain singularities; however, in our case, the moduli spaces of both types of connections exhibit remarkably well-behaved properties:
\begin{itemize}
    \item The moduli space depends only on the isotropy group of the homogeneous space, corresponding to the symmetry class of the associated cosmological model.
    \item The moduli space admits the structure of a finite-dimensional topological manifold (with boundary).
\end{itemize}

While the first property aligns with expectations from physical computations, the second introduces a significant regularization in the quantum theory of the homogeneous Ashtekar-Barbero-Immirzi formulation. This regularization arises from the ability to employ the standard, well-defined tools used in the quantization of systems with a finite number of degrees of freedom, thereby opening the way for a rigorous mathematical investigation of the corresponding quantum theory.

\medskip

Finally, the explicit computation of the moduli spaces reveals a clear and meaningful distinction between the two types of connections. This distinction allows us to conclude that the Ashtekar–Barbero–Immirzi–Sen connection, when interpreted in the cosmological setting, should be understood as a homogeneous spin connection (up to a slight generalization, as discussed below). This interpretation not only aligns with the mathematical structure of the moduli space but also agrees with the physical requirements of the theory, reinforcing the necessity of incorporating spin structures in any rigorous approach to the Ashtekar-Barbero-Immirzi formulation.

\section{Homogeneous spaces, spin structures and connections}
\label{sec:Rev}
In this section, we want to introduce the structure and objects that will be the main characters of our study and some well-known facts about them. A \textit{Riemannian homogeneous space} is a pair $(M,G)$ where $M$ is an $n$-dimensional orientable Riemannian manifold and $G$ a connected Lie group which acts transitively from the left via orientation-preserving isometries. Fixing a point $o\in M$, the stabilizer of that point is the so-called \textit{isotropy subgroup} $H\subset G$. Whenever $H$ is connected and compact, $M\simeq G/H$ is reductive, namely $\mathfrak{g}=\mathfrak{h}\oplus \mathfrak{m}$, where $\mathfrak{h}$ is the Lie subalgebra of $H$ and $\mathfrak{m}$ is a vector subspace invariant under the action of $H$ via adjoint representation $\Ad :G\to GL(\g)$.\\
We denote with $L_g:M\to M$ the group action of $G$ on $M$. This action defines a representation called \textit{linear isotropy representation} of $H$ in the linear group of $\T_oM$,
\begin{equation}
    \lambda: H \to GL(\T_oM);\ \ \  g\mapsto \lambda(g)=d_oL_g.
\end{equation}
In general, the action $L$ can be lifted to the orthonormal frame bundle $\OF{M}$ of $M$
\begin{align*}
    \tilde{L}:G\times P^{SO}_x(M) &\to P^{SO}_{L_g x}(M);\\
    (g,u) & \mapsto d_xL_g\circ u,
\end{align*}
where an element $u$ in the fiber over a point $x$ is a orientation-preserving linear isometry $u:\R^n\to \T_xM$. With this action, $G$ acts on $\OF{M}$ via automorphisms, hence $\OF{M}$ is a $G$-invariant principal $SO(n)$-bundle. Fixing a frame $u_o$ in the fiber over $o\in M$, the linear isotropy representation can be identified with the homomorphism $\lambda:H\to~SO(n)$ defined by
\[\lambda(g)=u_o^{-1}\circ d_oL_g\circ u_o.\]
When $M\simeq G/H$ is reductive, then $\lambda$ is faithful, and homogeneous metric-compatible connections exist. In our case, \textit{homogeneous metric-compatible connections} are connections, interpreted as $\so{n}$-valued 1-forms $\omega$ on the orthonormal frame bundle $\OF{M}$, that are $G$-invariant, namely
\[\tilde{L}_g^*\omega=\omega,\ \forall g\in G.\]
The existence of such a connection is ensured by a specialized version of Wang's theorem for reductive homogeneous spaces \cite{Kobayashi_Nomizu_1969}. Moreover, this Theorem also provides a classification of such connections.
\begin{theorem}[Wang's theorem]
    Let $P$ be a $G$-invariant principal $K$-bundle on a reductive homogeneous space $M\simeq G/H$ with decomposition $\mathfrak{g}=\mathfrak{h}\oplus \mathfrak{m}$. Then there is a one-to-one correspondence between the set of $G$-invariant connections on $P$ and the set of linear maps $\Lambda:\mathfrak{m}\to\mathfrak{k}$ such that 
    \[\Lambda(\Ad_h(v))=\Ad_{\lambda(h)}(\Lambda(v)),\ \text{ for all }  v\in\mathfrak{m},\, h\in H,\]
    where $\lambda$ denotes the isotropy homomorphism $H\to K$.
\end{theorem}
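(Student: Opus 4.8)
\emph{Sketch of the argument.} The plan is to trivialize $P$ along the $G$-orbit of a chosen point and to read off the connection form there. First I would record the structure of a $G$-invariant principal $K$-bundle over $M\simeq G/H$: fixing $o\in M$ with stabilizer $H$ and a point $p_0$ in the fibre $P_o$, the fact that $H$ preserves $P_o$ together with simple transitivity of the $K$-action on fibres produces a homomorphism $\lambda\colon H\to K$, characterised by $h\cdot p_0=p_0\cdot\lambda(h)$, which is exactly the isotropy homomorphism appearing in the statement; and the assignment $[g,k]\mapsto g\cdot p_0\cdot k$ identifies $P$ with the associated bundle $G\times_H K$, where $H$ acts on $G\times K$ by $(g,k)\cdot h=(gh,\lambda(h)^{-1}k)$. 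Under this identification the $G$-action on $P$ is left translation on the $G$-factor, the $K$-action is right translation on the $K$-factor, and these two actions commute.

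For the map from $G$-invariant connections to linear maps, given such an $\omega$ I set $\Lambda:=\omega_{p_0}|_{\mathfrak m}\colon\mathfrak m\to\mathfrak k$, where $\mathfrak m\subset\g$ is regarded inside $\T_{p_0}P$ through the fundamental vector fields $X\mapsto X^*_{p_0}:=\frac{d}{dt}\big|_{t=0}\exp(tX)\cdot p_0$ of the $G$-action. The equivariance relation then follows by combining, all evaluated at $p_0$, three standard facts: (i) the action of $h\in G$ pushes $X^*_p$ forward to $(\Ad_h X)^*_{h\cdot p}$; (ii) since the $G$- and $K$-actions commute, the $K$-action by $k$ pushes $X^*_p$ to $X^*_{p\cdot k}$; and (iii) the connection axioms $\omega(A^*)=A$ for $A\in\mathfrak k$ and $R_k^*\omega=\Ad_{k^{-1}}\circ\omega$. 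Using $G$-invariance of $\omega$, the relation $h\cdot p_0=p_0\cdot\lambda(h)$, and $\Ad_h v\in\mathfrak m$ (reductivity), one obtains for $v\in\mathfrak m$
\[
\Lambda(v)=\omega_{p_0}\!\big(v^*_{p_0}\big)=\omega_{h\cdot p_0}\!\big((\Ad_h v)^*_{h\cdot p_0}\big)=\Ad_{\lambda(h)^{-1}}\,\omega_{p_0}\!\big((\Ad_h v)^*_{p_0}\big)=\Ad_{\lambda(h)^{-1}}\Lambda(\Ad_h v),
\]
which rearranges to the desired identity $\Lambda(\Ad_h v)=\Ad_{\lambda(h)}\Lambda(v)$. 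The same computation on $\mathfrak h$ gives $\omega_{p_0}|_{\mathfrak h}=d\lambda$, a fact used for injectivity and in the converse.

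For the inverse direction, given a linear $\Lambda\colon\mathfrak m\to\mathfrak k$ with the stated compatibility, I would exhibit $\omega$ by descending an explicit form from $G\times K$. Writing the Maurer--Cartan form of $G$ as $\theta=\theta_{\mathfrak h}+\theta_{\mathfrak m}$ along $\g=\mathfrak h\oplus\mathfrak m$, and $\theta_K$ for the Maurer--Cartan form of $K$, set
\[
\tilde\omega:=\Ad_{k^{-1}}\!\big(d\lambda(\theta_{\mathfrak h})+\Lambda(\theta_{\mathfrak m})\big)+\theta_K .
\]
A direct check shows that $\tilde\omega$ is a principal $K$-connection form on $G\times K$ and that it annihilates the vectors tangent to the $H$-orbits; the one substantive point is that $\tilde\omega$ is invariant under $(g,k)\mapsto(gh,\lambda(h)^{-1}k)$, and \emph{this is precisely where the hypothesis on $\Lambda$ is used}, via $\Ad(H)$-invariance of the splitting $\g=\mathfrak h\oplus\mathfrak m$, the identity $d\lambda\circ\Ad_h=\Ad_{\lambda(h)}\circ d\lambda$, and $\Lambda\circ\Ad_h=\Ad_{\lambda(h)}\circ\Lambda$. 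Hence $\tilde\omega$ descends to a connection form $\omega$ on $P=G\times_H K$, which is automatically $G$-invariant since it is built from left-invariant data on the $G$-factor. Finally I would check that the two assignments are mutually inverse: the $\omega$ reconstructed from $\Lambda$ satisfies $\omega_{p_0}(X^*_{p_0})=d\lambda(X_{\mathfrak h})+\Lambda(X_{\mathfrak m})$, so restricting to $\mathfrak m$ returns $\Lambda$; conversely a $G$-invariant $\omega$ is determined by $\omega_{p_0}$ — in fact by $\Lambda=\omega_{p_0}|_{\mathfrak m}$ alone, since $\omega_{p_0}$ is the identity on the vertical subspace $V_{p_0}\cong\mathfrak k$ and $\{X^*_{p_0}:X\in\mathfrak m\}$ is a complement to $V_{p_0}$ — and then $\omega$ is fixed everywhere by $G$-invariance along the orbit $G\cdot p_0$ and by $K$-equivariance along the fibres.

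The main obstacle is the well-definedness of the descent in the converse direction: verifying the $H$-invariance of $\tilde\omega$ is exactly the step that forces the compatibility condition on $\Lambda$, and it demands careful bookkeeping of left versus right actions and of where $\lambda(h)$, rather than $\lambda(h)^{-1}$, occurs. The remaining verifications — that $\tilde\omega$ is a principal connection, horizontality along $H$-orbits, and that the two constructions are inverse — are formal manipulations with the Maurer--Cartan equations, fundamental vector fields, and the two defining axioms of a principal connection.
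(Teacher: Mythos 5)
Your argument is correct, but there is nothing in the paper to compare it against: the paper states Wang's theorem without proof, referring to Kobayashi--Nomizu, and your sketch is essentially the proof found in that reference (and in Biswas--Teleman). The forward direction is right as written: defining $\Lambda=\omega_{p_0}|_{\mathfrak{m}}$ through fundamental vector fields of the $G$-action and combining $G$-invariance, $h\cdot p_0=p_0\cdot\lambda(h)$, the commutation of the $G$- and $K$-actions, and $R_k^*\omega=\Ad_{k^{-1}}\circ\,\omega$ gives exactly the equivariance condition, with reductivity entering only through $\Ad_h(\mathfrak{m})\subset\mathfrak{m}$. The converse via the identification $P\cong G\times_H K$ and descent of $\tilde\omega=\Ad_{k^{-1}}\bigl(d\lambda(\theta_{\mathfrak{h}})+\Lambda(\theta_{\mathfrak{m}})\bigr)+\theta_K$ is the standard construction, and you correctly locate where the hypothesis on $\Lambda$ is used (the $H$-invariance of $\tilde\omega$, together with annihilation of the $H$-orbit directions). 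Two points deserve explicit mention in a full write-up but are not gaps: the identification $P\cong G\times_H K$ (surjectivity of $(g,k)\mapsto g\cdot p_0\cdot k$ and the computation of its fibres) is what justifies both the existence of $\lambda$ and the uniqueness argument, and the injectivity step should state that $\{X^*_{p_0}:X\in\mathfrak{m}\}$ complements $V_{p_0}$ because $d\pi$ maps it isomorphically onto $\T_oM$; you assert both, and both are true. Note also that Kobayashi--Nomizu phrase the correspondence through maps $\Lambda:\mathfrak{g}\to\mathfrak{k}$ with $\Lambda|_{\mathfrak{h}}=d\lambda$; your restriction to $\mathfrak{m}$ is precisely the reductive specialization used in the paper, and your observation that $\omega_{p_0}|_{\mathfrak{h}}=d\lambda$ holds automatically explains why no information is lost.
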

In the case of $\OF{M}$, we just need to set $K=SO(n)$. We would like to stress that the following vector spaces are isomorphic
\[\mathfrak{m}\xrightarrow{\sim} \T_oM \xrightarrow{\sim}  \R^n,\]
where the first isomorphism is given by the tangent map of the projection $\pi:G\to M$ on the identity element, while the second is by the choice of a frame $u_o$. Through those isomorphisms, the adjoint representation $\Ad:H\subset G \to GL(\g)$ is mapped to the linear isotropy representation, and in the isotropy homomorphism, respectively.

\medskip

We are interested in the cases in which $M$ admits a spin structure, namely a pair $(\SP{M}, \Bar{\rho})$ where $\SP{M}$ is a principal $\mathrm{Spin}(n)$-bundle over $M$ and $\Bar{\rho}:\SP{M}\to\OF{M}$ is a double-covering such that the following diagram commutes \cite{Bar_2024}.
\[
    \begin{tikzcd}
    \SP{M}\times \mathrm{Spin}(n) \arrow[r] \arrow[dd,"\Bar{\rho}\times\rho"] & \SP{M} \arrow[dd,"\Bar{\rho}"]\arrow[rd] & \\
       &   & M\\
    \OF{M}\times SO(n) \arrow[r] & \OF{M} \arrow[ru]& 
    \end{tikzcd}
\]
Here, $\rho:\mathrm{Spin}(n)\to SO(n)$ is the twofold covering homomorphism of $SO(n)$.
Over a homogeneous manifold $M\simeq G/H$, we call \textit{homogeneous spin structure} a spin structure $(\SP{M}, \Bar{\rho})$ equipped with an action of $G$ on $\SP{M}$ covering the action of $G$ on $\OF{M}$ and that is invariant under it. Namely, for every $g\in G$ there exists an automorphism $\Bar{L}_g:\SP{M}\to\SP{M}$ such that $\Bar{\rho}\circ \Bar{L}=\tilde{L}\circ \Bar{\rho}$.
In this case we can define a \textit{homogeneous spin connection} $\Bar{\omega}$ on $\SP{M}$ as the pullback of a homogeneous metric-compatible connection via a homogeneous spin structure $(\SP{M}, \Bar{\rho})$, i.e.\[\Bar{\omega}=\rho_*^{-1}\Bar{\rho}^*\omega.\]
However, a homogeneous spin structure does not always exist. To remedy this problem, we can notice that the set of homogeneous spin connections is in one-to-one correspondence, due to the injectivity of the pullback, with the set of homogeneous metric-compatible connections when a homogeneous spin structure exists. And, in the physical theory, the local fields do not depend on the specific spin structure, so they reflect directly the properties of homogeneous metric-compatible connections. Thus, we will study the latter set as a ``generalization'' of the first.

We recall that on the set of metric-compatible connections acts the group of vertical automorphisms, or gauge group, $\mathcal{G}$ of $\OF{M}$, whose elements are automorphisms of $\OF{M}$ that cover the identity on $M$.  A subgroup of $\mathcal{G}$ acts on the set of homogeneous metric-compatible connections, and it is defined by
\begin{equation}
\label{gau-inv}
    \mathcal{G}^G=\{f\in\mathcal{G}\ |\ f\circ \tilde{L}_g=\tilde{L}_g\circ f,\,\forall g\in G\}.
\end{equation}

On a homogeneous space $M\simeq G/H$, principal $\mathrm{Spin}(n)$-bundles have their own homogeneous connections for which homogeneous spin connections represent a subset. The collection of all possible homogeneous connections on all possible principal $\mathrm{Spin}(n)$-bundles modulo gauge transformations has been studied in the past years due to its relation with Yang-Mills theory. This collection is a moduli space defined as follows (cf. \cite{Biswas_Teleman_2014} for details)
\begin{equation}
    \mathcal{M}=\left\{\begin{matrix}(\mu,\Lambda)\in \mathrm{Hom}(H,\mathrm{Spin}(n))\times \mathrm{Hom}_{\R}(\mathfrak{g}/\mathfrak{h},\mathfrak{spin}(n))\ \\ \mbox{s.t.}\ \Lambda\circ \Ad_h=\Ad_{\mu(h)}\circ \Lambda\ \forall h\in H \end{matrix}\right\}/\mathrm{Spin}(n).
\end{equation}
where $\mathrm{Spin}(n)$ acts via conjugation on the pair $(\mu,\Lambda)$, namely the action is $(\mu,\Lambda)\mapsto(g\mu g^{-1},\Ad_g\circ\Lambda)$ for every $g\in\mathrm{Spin}(n)$.

\subsection{Specialty of dimension 3}
In the peculiar and physically relevant case of dimension 3, we want to analyze the collection of homogeneous spin connections and compare it with the moduli space of homogeneous $\mathrm{Spin}(3)$-connections.

\medskip

We recall a special feature of dimension $3$: simply connected homogeneous Riemannian manifolds are classified. It can be shown that the isotropy group $H$ is a Lie subgroup of $O(3)$, leading to three distinct cases according to the possible symmetries. Since $H$ is connected and compact, it must be isomorphic to $SO(3)$, $U(1)$, or be trivial. These cases correspond, respectively, to a homogeneous and isotropic Universe, to a homogeneous Universe with axial symmetry, and, finally, to the so-called Bianchi Universes. In what follows, we analyze these three cases separately.

\medskip

In dimension 3, we also have some useful isomorphisms. The spin bundle $\SP{M}$ is trivial, namely it is diffeomorphic to $M\times \mathrm{Spin}(3)$. Moreover, $\mathrm{Spin}(3)$ is isomorphic as Lie group to $SU(2)$, and the Lie algebras $\so{3}$ and $\su{2}$ are isomorphic. Furthermore,  $\su{2}$ equipped with the adjoint representation of $SU(2)$, $\so{3}$ equipped with the adjoint representation of $SO(3)$, and $\R^3$ equipped with the defining representation of $SO(3)$ are all isomorphic representations of $SU(2)$ considering the last two actions precomposed with $\rho:SU(2)\to SO(3)$. Namely, there exists an invertible equivariant map between each pair of those vector spaces.

\medskip

Using the previous properties, in the next Sections we are going to compute the moduli space of the two different types of connections in the three cosmological cases identified by the isotropy group $H$. As explained in Sec.\ref{sec:Intro}, the first moduli space is the moduli space of homogeneous metric-compatible connections. According to Wang's theorem, the set of homogeneous metric-compatible connections $\A^G$ is an affine space associated with the vector space
\begin{equation}
    \{\Lambda\in \mathrm{Hom}_{\R}(\R^3,\so{3})\ |\ \Lambda\circ \lambda(h)=\Ad_{\lambda(h)}\circ \Lambda\ \forall h\in H\}.
\end{equation}
The moduli space we are interested in is the space of the orbits $\A^G/\mathcal{G}^G$, where the dependence on the homogeneous space will be only encoded by $H$.
The second moduli space is the moduli space of homogeneous $SU(2)$ connections, defined by
\begin{equation}
    \label{moduli}
    \mathcal{M}=\left\{\begin{matrix}(\mu,\Lambda)\in \mathrm{Hom}(H,SU(2))\times \mathrm{Hom}_{\R}(\R^3,\su{2})\ \\ \mbox{s.t. }\Lambda\circ \lambda(h)=\Ad_{\mu(h)}\circ \Lambda\ \ \forall h\in H\end{matrix}\right\}/SU(2).
\end{equation}
Since $\g/\mathfrak{h}=\mathfrak{m}\cong\R^3$, even this moduli space depends only on the isotropy group $H$.

\section{Homogeneous Connections on Bianchi groups}
Let $M$ be a $3$-dimensional orientable connected Riemannian manifold equipped with a Lie group structure $G$ and with a left-invariant metric. It is a particular case of a Riemannian homogeneous space in which the stabilizer $H$ is trivial, $M\simeq G/\{e\}$. The simply connected spaces of this kind are known as Bianchi groups. On such spaces, the natural action we are going to consider is the left multiplication
\begin{align*}
    L:G\times G\to G;\\
    (g,g')\mapsto L_{g}g'=g g'
\end{align*}

\begin{remark}[\cite{Lawn_2022}]
    On a Lie group $G$, there exists a unique homogeneous spin structure $(\SP{G},\Bar{\rho})$ equipped with an action $\Bar{L}$ of $G$ that covers $\tilde{L}$.
\end{remark}
In this case, the space of homogeneous metric-compatible connections $\A^G$ is simply given by Wang's theorem as an affine space over the finite-dimensional vector space $\mathrm{Hom}_{\R}(\g,\so{3})$. 
The gauge group $\mathcal{G}^G$, which acts on $\A^G$, is the group of $G$-equivariant vertical automorphisms defined in \eqref{gau-inv}.
\begin{proposition}
\label{gaugehom}
    $\mathcal{G}^G$ is isomorphic to $SO(3)$.
\end{proposition}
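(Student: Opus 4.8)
\emph{Plan.} The crucial feature of a Lie group is that it is parallelizable, so $\OF{G}$ is (equivariantly) trivial. The plan is to trivialize $\OF{G}$ by a left-invariant orthonormal frame, observe that in this trivialization the lifted action $\tilde{L}$ is left translation on the base and the identity on the fibre, write down a general vertical automorphism, and then impose $G$-invariance. Concretely, I would fix a positively oriented orthonormal basis of $\g\cong\T_eG$, regarded as a frame $u_o\in P^{SO}_e(G)$, and set $s(x)=d_eL_x\circ u_o$. This is a well-defined global section $s:G\to\OF{G}$: since the metric is left-invariant each $d_eL_x$ is a linear isometry, and since $G$ is connected each $L_x$ preserves orientation. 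Hence $\Phi:G\times SO(3)\xrightarrow{\sim}\OF{G}$, $\Phi(x,a)=s(x)\cdot a$, is an isomorphism of principal $SO(3)$-bundles, and the chain rule gives $\tilde{L}_g(s(x))=d_xL_g\circ d_eL_x\circ u_o=d_eL_{gx}\circ u_o=s(gx)$, so under $\Phi$ the lifted action reads $\tilde{L}_g\colon(x,a)\mapsto(gx,a)$.

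Transporting everything along $\Phi$, every vertical automorphism of $G\times SO(3)$ has the form $f(x,a)=(x,\tau(x)a)$ for a unique smooth $\tau:G\to SO(3)$, and composition of automorphisms corresponds to the pointwise product of the corresponding maps $\tau$; thus $\mathcal{G}\cong C^\infty(G,SO(3))$ as groups. Imposing the invariance condition \eqref{gau-inv}, the identity $f(\tilde{L}_g(x,a))=\tilde{L}_g(f(x,a))$ becomes $(gx,\tau(gx)a)=(gx,\tau(x)a)$, i.e.\ $\tau(gx)=\tau(x)$ for all $g,x\in G$; setting $x=e$ forces $\tau$ to be the constant map with value $\tau(e)$. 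Conversely, any constant $\tau$ manifestly defines an element of $\mathcal{G}^G$. Therefore $\mathcal{G}^G$ is precisely the subgroup of constant $SO(3)$-valued gauge transformations, and $f\mapsto\tau(e)$ is a group isomorphism $\mathcal{G}^G\cong SO(3)$.

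I do not expect a genuine obstacle here. The only points requiring care are the bookkeeping between the principal right $SO(3)$-action and the left $G$-action in identifying $\mathcal{G}$ with $C^\infty(G,SO(3))$, and the orientation-compatibility of the section $s$, which is exactly where connectedness of $G$ is used. (Conceptually this is the $\lambda(H)=\{1\}$ instance of the general fact that $\mathcal{G}^G$ is isomorphic to the centralizer of $\lambda(H)$ in the structure group, but the direct computation above is self-contained and will be the one I write out.)
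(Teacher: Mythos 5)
Your proof is correct, and it takes a mildly different route from the paper's. The paper argues directly on $\OF{G}$ without trivializing: given $f\in\mathcal{G}^G$ with $f(p)=p\,b_p$, it uses the simple transitivity of $G$ on the base to transport the value of $f$ from one point $p$ to any other point $p'$, obtaining $b_{p'}=a_{p,p'}b_p a^{-1}_{p,p'}$ where $\tilde{L}_g p=p'a_{p,p'}$, and then takes $f\mapsto b_p$ (for a fixed $p$) as the isomorphism. You instead exploit the canonical $G$-equivariant global section $s(x)=d_eL_x\circ u_o$ coming from a left-invariant frame; in that trivialization the lifted action is trivial on the fibre, so the conjugating factors $a_{p,p'}$ disappear and invariance of $\tau$ under left translation immediately forces $\tau$ to be constant. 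The two arguments rest on the same mechanism (the value of an invariant gauge transformation at one point determines it everywhere, because $G$ moves any point of $\OF{G}$ to any fibre), but your version has two small advantages: the equivariant section makes the converse direction explicit (every constant $\tau$ manifestly yields an element of $\mathcal{G}^G$, so surjectivity of $f\mapsto\tau(e)$ is immediate, a point the paper leaves implicit), and it displays the group law cleanly as pointwise multiplication. The paper's version, on the other hand, avoids choosing a section and generalizes directly to nontrivial isotropy, which is exactly how Prop.\ref{gaugeax} and the isotropic case are later handled — your closing remark that $\mathcal{G}^G$ is the centralizer of $\lambda(H)$ is precisely that generalization.
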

\begin{proof}
    Let $p,p'\in\OF{G}$ be such that $\pi(p)=x$ and $\pi(p')=y$, then there exists a unique $g\in G$ such that $L_g x=y$. Hence, there exists a unique $a_{p,p'}\in SO(3)$ such that $\tilde{L}_gp=p'a_{p,p'}$. Let $f$ be in $\mathcal{G}^G$; the value of $f(p')$ is fixed by the value of $f(p)$. Recalling that there exists a unique $b_p\in SO(3)$ such that $f(p)=p\,b_p,\,\forall p\in\OF{G}$:
    \begin{equation}
        \begin{split}
        f(p')&=f(\tilde{L}_gp\,a^{-1}_{p,p'})=f(\tilde{L}_gp)a^{-1}_{p,p'}=\tilde{L}_gf(p)a^{-1}_{p,p'}=\\
        &=\tilde{L}_g(pb_p)a^{-1}_{p,p'}=\tilde{L}_gp\,b_pa^{-1}_{p,p'}=\\
        &=p'a_{p,p'}b_pa^{-1}_{p,p'}
        \end{split}
    \end{equation}
    From which $b_{p'}=a_{p,p'}b_pa^{-1}_{p,p'}$. Thus, the isomorphism is given, fixed a point $p\in\OF{G}$, by the map $f\mapsto b_p$.
\end{proof}
Thus, the space $\A^G/\mathcal{G}^G$, composed by equivalence classes of homogeneous metric-compatible connections, in reason of Wang's theorem and Prop.\ref{gaugehom}, is in one-to-one correspondence with $\mathrm{Hom}_{\R}(\R^3,\so{3})/SO(3)$ with the adjoint action of $SO(3)$ on its Lie algebra.

\medskip

Considering just the spin bundle $\SP{G}$, in \cite{Biswas_Teleman_2014}, we can find an expression of the moduli space $\mathcal{M}$ which can be interpreted as the disjoint union of the equivalence classes of homogeneous connections with respect to all the possible nonequivalent lifts of the $G$-action on the all possible spin bundles. In this particular case, the spin bundle can be only the trivial one, and all the lifts are equivalent. Indeed, the moduli space \eqref{moduli} is pretty simple $\mathcal{M}=\mathrm{Hom}_{\R}(\R^3,\su{2)}/SU(2)$ and the following proposition holds
\begin{proposition}
    $\A^G/\mathcal{G}^G$ and $\mathcal{M}$ are in one-to-one correspondence. Furthermore, there exists a bijective map from $\M$ to $M(3,\R)/SO(3)$, with the action of $SO(3)$ on the space of the $3\times 3$ real matrices $M(3,\R)$ given by the left composition.
\end{proposition}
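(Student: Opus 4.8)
The plan is to reduce both moduli spaces to orbit spaces of linear maps from $\R^3$ to a three-dimensional real vector space carrying an orthogonal action, and then to match these orbit spaces using the equivariant isomorphisms $\su{2}\cong\so{3}\cong\R^3$ and the surjectivity of $\rho:SU(2)\to SO(3)$ recalled above.

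First I would simplify $\M$. Since $H=\{e\}$, the set $\mathrm{Hom}(H,SU(2))$ is a single point, the intertwining condition in the definition \eqref{moduli} of $\M$ is vacuous, and conjugation by $SU(2)$ fixes the trivial homomorphism; hence $\M=\mathrm{Hom}_{\R}(\R^3,\su{2})/SU(2)$ with $SU(2)$ acting by $\Lambda\mapsto\Ad_g\circ\Lambda$. On the other hand, the discussion preceding the statement already identifies $\A^G/\mathcal{G}^G$ with $\mathrm{Hom}_{\R}(\R^3,\so{3})/SO(3)$, where $SO(3)$ acts by post-composition with its adjoint representation. Thus the first assertion reduces to producing a bijection between $\mathrm{Hom}_{\R}(\R^3,\su{2})/SU(2)$ and $\mathrm{Hom}_{\R}(\R^3,\so{3})/SO(3)$.

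To obtain it, fix an $SU(2)$-equivariant linear isomorphism $\varphi:\su{2}\to\so{3}$, i.e. $\varphi\circ\Ad_g=\Ad_{\rho(g)}\circ\varphi$ for all $g\in SU(2)$, which exists because $\su{2}$ and $\so{3}$ are isomorphic as $SU(2)$-representations. Post-composition $\Lambda\mapsto\varphi\circ\Lambda$ is then a linear isomorphism $\mathrm{Hom}_{\R}(\R^3,\su{2})\to\mathrm{Hom}_{\R}(\R^3,\so{3})$ intertwining the $SU(2)$-action on the source with the $SO(3)$-action on the target along $\rho$; since $\rho$ is onto, it descends to a bijection on orbit spaces, which together with the previous step gives $\M\cong\A^G/\mathcal{G}^G$. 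Running the same argument with an $SU(2)$-equivariant isomorphism $\psi:\su{2}\to\R^3$, where $\R^3$ carries the defining $SO(3)$-representation precomposed with $\rho$, the map $\Lambda\mapsto\psi\circ\Lambda$ identifies $\mathrm{Hom}_{\R}(\R^3,\su{2})$ with $\mathrm{Hom}_{\R}(\R^3,\R^3)=M(3,\R)$ and carries $\Lambda\mapsto\Ad_g\circ\Lambda$ into $N\mapsto\rho(g)\,N$, i.e. left composition by $\rho(g)\in SO(3)$; passing to orbits yields the claimed bijection $\M\to M(3,\R)/SO(3)$.

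The only delicate point — and what I expect to be the main, if minor, obstacle — is the passage from $SU(2)$-orbits to $SO(3)$-orbits in each transport step. This is legitimate because the $SU(2)$-action on each of $\su{2}$, $\so{3}$ and $\R^3$ factors through $\rho$: the kernel of $\rho$ is central, hence acts trivially, and $\rho$ is surjective, so under the chosen equivariant isomorphisms the $SU(2)$- and $SO(3)$-orbit partitions coincide pointwise. Everything else is routine linear algebra.
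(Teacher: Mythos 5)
Your proposal is correct and follows essentially the same route as the paper: transport by post-composition with the equivariant isomorphisms $\rho_*:\su{2}\to\so{3}$ and the hat map $\so{3}\cong\R^3$, then pass to the quotients, using surjectivity of $\rho$ so that $SU(2)$- and $SO(3)$-orbits match. The only (harmless) differences are that you collapse the two transport steps into one direct identification with $M(3,\R)$ and spell out the orbit-matching point that the paper leaves implicit in ``it is invertible because $\rho_*$ is''.
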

\begin{proof}
    The Lie algebra $\so{3}$ and $\su{2}$ are isomorphic. Let $\rho:SU(2)\to SO(3)$ be the natural double covering homomorphism. The isomorphism between the Lie algebras is given by the pushforward $\rho_*:\su{2}\to\so{3}$ \cite{Lawson_Michelsohn_1989}. Considering $\so{3}$ and $\su{2}$ as carrying spaces of the respective adjoint action, the isomorphism $\rho_*$ is also an equivariant map, in the meaning that $\rho_*(\Ad_a v)=\Ad_{\rho(a)}\rho_*(v)$ for all $v\in\su{2},\,a\in SU(2)$.\\
    Elements of $\mathrm{Hom}_{\R}(\R^3,\su{2)}/SU(2)$ are equivalent classes of linear maps $[\Lambda]$ with $\Lambda\sim \Ad_a\circ\Lambda$ for any $a\in SU(2)$. We can define the map
    \begin{align*}
        \varrho:\mathrm{Hom}_{\R}(\R^3,\su{2)}/SU(2)&\to\mathrm{Hom}_{\R}(\R^3,\so{3})/SO(3);\\
        [\Lambda] &\mapsto [\rho_*\circ \Lambda].
    \end{align*}
    This map is well-defined thanks to the equivariant property 
    \begin{align*}
        \varrho[\Ad_a \circ \Lambda]=[\rho_*\circ\Ad_a \circ \Lambda]=[\Ad_{\rho(a)}\circ\rho_*\circ\Lambda]=[\rho_*\circ\Lambda]=\varrho[\Lambda].
    \end{align*}
    Clearly, it is invertible because $\rho_*$ is.

    \medskip

    Moreover, $\so{3}$ with the adjoint action of $SO(3)$ and $\R^3$ with the standard action of $SO(3)$ are isomorphic representation. The isomorphism is given by
    \begin{equation*}
        \begin{pmatrix}
            x\\
            y\\
            z
        \end{pmatrix}\mapsto\begin{pmatrix}
            0 & -z & y\\
            z & 0 & -x\\
            -y & x & 0
        \end{pmatrix}.
    \end{equation*}
    Hence, following a similar procedure, we conclude that there exists a one-to-one correspondence between $\mathrm{Hom}_{\R}(\R^3,\so{3})/SO(3)$ and $\mathrm{Hom}_{\R}(\R^3,\R^3)/SO(3)$.\\ Since $\mathrm{Hom}_{\R}(\R^3,\R^3)\cong M(3,\R)$, we conclude
    \begin{equation}
        \mathrm{Hom}_{\R}(\g,\su{2)}/SU(2)\xrightarrow{\sim}\mathrm{Hom}_{\R}(\g,\so{3})/SO(3)\xrightarrow{\sim}M(3,\R)/SO(3),
    \end{equation}
    with the action of $SO(3)$ on $M(3,\R)$ to be the left composition.
\end{proof}
Notice that the vector spaces associated to homogeneous connections are isomorphic as vector spaces $\mathrm{Hom}_{\R}(\g,\su{2)}\cong\mathrm{Hom}_{\R}(\g,\so{3})$. At the level of connections, this isomorphism can be implemented by the unique homogeneous spin structure, which maps homogeneous metric-compatible connections $\omega$ into homogeneous spin connections $\Bar{\omega}=\rho_*^{-1}\Bar{\rho}^*\omega$.

\bigskip

Our aim is characterize the moduli space $\mathcal{M}$ realized as $M(3,\R)/SO(3)$. Before to do that, let us notice that the set of positive (negative) semi-definite symmetric $3\times 3$ matrices $\mathcal{C}_{+}\,(\mathcal{C}_-)$ is a convex cone in the vector space of symmetric $3\times 3$ matrices, and its boundary $\partial\mathcal{C}_+\,(\partial\mathcal{C}_-)$ is composed by matrices with a vanishing determinant. Hence, we can enounce the following Lemma:
    \begin{lemma}
        The moduli space $M(3,\R)/SO(3)$ is in one-to-one correspondence with $\mathcal{C}_{+}\cup\mathcal{C}_-/\sim$. Where the equivalence relation is:
        \begin{equation}
            P,P'\in \partial\mathcal{C}_+\cup\partial\mathcal{C}_-,\ \ \ P\sim P'\ \mathrm{iff} \ P=-P'.
        \end{equation}
    \end{lemma}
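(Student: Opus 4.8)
The plan is to use the polar decomposition together with an analysis of the left $SO(3)$-orbits in $M(3,\R)$. Recall that every $A\in M(3,\R)$ factors as $A=QP$ with $Q\in O(3)$ and $P=(A^{T}A)^{1/2}$ the unique positive semi-definite symmetric square root of $A^{T}A$; when $A$ is invertible $Q$ is unique as well and $\det Q$ has the sign of $\det A$. First I would prove that the natural assignment $[P]\mapsto SO(3)\cdot P$ is surjective onto $M(3,\R)/SO(3)$: given $A$, put $P=(A^{T}A)^{1/2}\in\mathcal{C}_+$. If $\det A\ge 0$ one can always take the orthogonal factor in $SO(3)$ — for $\det A>0$ this is automatic since then $\det Q=1$, and for $\det A=0$ it uses that a reflection fixing $\ker P$ leaves $P$ unchanged, because $\mathrm{range}(P)=(\ker P)^{\perp}$ — so $[A]=SO(3)\cdot P$. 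If $\det A<0$, then $-A$ has positive determinant, hence $[-A]=SO(3)\cdot P$ and $[A]=SO(3)\cdot(-P)$ with $-P\in\mathcal{C}_-$. Thus every orbit meets $\mathcal{C}_+\cup\mathcal{C}_-$.

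Next I would check that this assignment factors through the relation $\sim$. If $P$ is singular, pick a line $\ell\subseteq\ker P$ (possible since $\ker P\neq 0$); the rotation $R$ by angle $\pi$ about $\ell$ lies in $SO(3)$ and acts as $-\mathrm{id}$ on $\ell^{\perp}\supseteq\mathrm{range}(P)$, so $RP=-P$. Hence $P$ and $-P$ lie in the same $SO(3)$-orbit whenever $P\in\partial\mathcal{C}_+\cup\partial\mathcal{C}_-$, which is exactly the content of the identification $\sim$ on the boundary. In particular the map descends to a well-defined map $(\mathcal{C}_+\cup\mathcal{C}_-)/\!\sim\ \to M(3,\R)/SO(3)$.

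The core of the proof is injectivity: if $P,P'\in\mathcal{C}_+\cup\mathcal{C}_-$ and $RP=P'$ for some $R\in SO(3)$, I claim $P'=P$, or else $P,P'$ are both singular and $P'=-P$. When $P$ (hence $P'$) is invertible, comparing determinants shows $P$ and $P'$ are both positive definite or both negative definite, and the uniqueness of the polar decomposition (each of $P,P'$ being its own polar part with trivial orthogonal factor) forces $R=\mathrm{id}$ and $P=P'$. When $P$ is singular, transposing $P'=RP$ and using $P'^{T}=P'$, $P^{T}=P$ gives $RP=PR^{-1}$, from which $R(\ker P)=\ker P$ and therefore $R$ also preserves $\mathrm{range}(P)=(\ker P)^{\perp}$; a short case analysis on $\mathrm{rank}(P)\in\{0,1,2\}$ — using that $RP$ must again be symmetric and semi-definite — shows the restriction of $R$ to $\mathrm{range}(P)$ can only be $\pm\mathrm{id}$, whence $P'=\pm P$. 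Putting surjectivity, the factorization through $\sim$, and injectivity together yields the asserted bijection. The one delicate point is the rank-$2$ case of this analysis: there $R$ could a priori restrict to a reflection on the plane $\mathrm{range}(P)$, and one must rule this out precisely because such an $R$ would send the semi-definite $P$ to an indefinite matrix, contradicting $P'\in\mathcal{C}_+\cup\mathcal{C}_-$.
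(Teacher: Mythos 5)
Your argument is correct and takes essentially the same route as the paper: the polar decomposition $M=UP$ with $P=\sqrt{M^{t}M}$, absorption of the sign of the orthogonal factor according to $\det M$, and a rank-by-rank analysis of singular matrices showing that the only additional identification is $P\sim -P$ on $\partial\mathcal{C}_+\cup\partial\mathcal{C}_-$, exactly as in the paper's proof of the bijection with $(\mathcal{C}_+\cup\mathcal{C}_-)/\sim$. The only slip is verbal: the improper orthogonal factor you absorb for singular $P$ must act as the identity on $\mathrm{range}(P)$ and with determinant $-1$ on $\ker P$, rather than being ``a reflection fixing $\ker P$''; your justification via $\mathrm{range}(P)=(\ker P)^{\perp}$ shows this is what you intend, and your injectivity analysis is, if anything, spelled out more explicitly than the paper's.
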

\begin{proof}
    We consider $M(3,\R)$ equipped with the following action of $SO(3)$:
    \begin{align*}
        SO(3)\times M(3,\R)&\to M(3,\R);\\
        (O,M)&\mapsto OM.
    \end{align*}
    This action is clearly not free.\\
    Let $M\in M(3,\R)$, in reason of the polar decomposition, there exists a unique positive semi-definite symmetric $3\times 3$ matrix $P=\sqrt{M^tM}$ such that $M=UP$ for some $U\in O(3)$.\\
    If $\det(M)\neq 0$, then the orthogonal matrix is unique and $P$ is positive definite. Multiplying both for ${\rm sgn}(\det(M))$, we always obtain that $M$ is given by the composition of a special orthogonal matrix and a positive-definite (if $\det(M)>0$) or negative-definite (if $\det(M)<0$) symmetric matrix. Hence, the map $M\mapsto {\rm sgn}(\det(M))\sqrt{M^tM}$ is constant on the orbit of the $SO(3)$-action and differs on different orbits. Thus, the map passes to the quotient and defines a bijection between the orbit space of invertible matrices in $M(3,\R)$ and the set of positive-definite and negative-definite symmetric matrices.\\
    If $\det(M)=0$, we need to study separately the different ranks; in such cases, $U$ is no longer unique. If $\mathrm{rk}(M)=2$, let $U,U'\in O(3)$ be such that $M=UP=U'P$, then $(U-U')P=0$, and so $U=U'$ on $Ran(P)$. Furthermore, since $U,U'\in O(3)$ and $ker(P)\perp Ran(P)$, hence $U(ker(P))\subset ker(P)$ and $U'(ker(P))\subset ker(P)$, and then or $U= U'$, or $U=-U'$ on $ker(P)$. Thus, exactly one of $U$ and $U'$ is in $SO(3)$. This means that the stabilizer of $M$ is trivial. Therefore, for all $M$ such that $\mathrm{rk}(M)=2$, there exists a unique pair $U_+,U_-\in SO(3)$, and a unique positive and a unique negative semi-definite symmetric matrices $P_+$ and $P_-$ such that $M=U_+P_+=U_-P_-$. Where, $U_+,P_+$ is the pair constructed before, and $U_-=-\mathrm{Id}_{Ran(P_+)}U_+$ and $P_-=-\mathrm{Id}_{Ran(P_+)}P_+=-P_+$.\\
    If $\mathrm{rk}(M)=1$, we can proceed in a similar analysis as before. In this case, there are infinite $U,U'\in O(3)$; in fact, the stabilizer of such an $M$ is isomorphic to $SO(2)$. Moreover, there exist infinite pairs $U_+,U_-\in SO(3)$ as before, but the semi-definite symmetric matrices are unique: $P_+=\sqrt{M^tM}$ and $P_-=-P_+$.\\
    If $\mathrm{rk}(M)=0$, the orbit contains only the null matrix, and so $P_+=P_-=0$.\\
    This means that, when $\det(M)=0$, in the same orbit there are $P_+$ and $P_-$. So, it is no longer true that different semi-definite symmetric matrices identify different orbits. Thus, we need to impose the following equivalence relation on semi-definite symmetric matrices with $\det(P)=0$: $P\sim P'$ if and only if $P=-P'$.\\
\end{proof}

Let us consider $\M$ as in the above identification, and the space of traceless symmetric $3\times 3$ matrices ${\rm Sym}_0(\R^3)$. Both spaces admit a filtration in terms of algebraic manifolds with edges. Consider $\M$ with collection of subspaces \[M_n=\{P\in M(3,\R)/SO(3)\ |\ {\rm rk}(P)\leq n\} \quad \mbox{for } n=0,\dots,3.\]
While, for ${\rm Sym}_0(\R^3)\times \R$, we can consider the subspaces $S_n$ given by \begin{gather*}
    S_3={\rm Sym}_0(\R^3)\times \R,\\ 
    S_2=\left\{(A,\lambda)\in {\rm Sym}_0(\R^3)\times \R\ |\ \lambda=0\right\},\\
    S_1=\left\{(A,\lambda)\in{\rm Sym}_0(\R^3)\times \R\ \Big|\ \begin{matrix}
        27(\det(A))^2=-4({\rm min}(A))^3\\
        \lambda=0,\,\det(A)\geq0
    \end{matrix}\right\},\\ 
    S_0=\left\{(A,\lambda)\in{\rm Sym}_0(\R^3)\times \R\ |\ \lambda=0,\,A=0\right\}.
\end{gather*}
Where ${\rm min} (A)$ denotes the second coefficient of the characteristic polynomial, namely: 

\[{\rm min}(A)=\frac{1}{2}\left(tr(A^2)-(trA)^2\right),\ \text{ for $A\in M(3,\R)$.}\]


This filtration induces a stratification of the spaces, which on $\M$ coincides with the stratification naturally induced by the quotient. Moreover, the equation that describes $S_1$ is extremely singular; in fact, if there were a point on which the gradient of the equation is non-trivial, then it would be possible to describe $S_1$ locally as a four-dimensional manifold. This is impossible because the choice of an element in $S_1$ is equivalent to the choice of an eigenvalue $2\mu >0$ and the relative eigenvalue, and those conditions can be done with three degrees of freedom. So, as a topological manifold, $S_1$ has dimension 3 while $S_2$ has dimension 5. As a result, we can characterize the moduli space $\M$ in terms of traceless symmetric matrices:

\begin{proposition}
    There exists a weakly stratified isomorphism between $\mathcal{M}$ and ${\rm Sym}_0(\R^3)\times \R$.
\end{proposition}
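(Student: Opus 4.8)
The plan is to combine the identification $\M\cong\mathcal{C}_+\cup\mathcal{C}_-/\!\sim$ obtained in the previous Lemma with the canonical splitting $\mathrm{Sym}(\R^3)={\rm Sym}_0(\R^3)\oplus\R\,\mathrm{Id}$, and to write down an explicit homeomorphism. The geometric idea is that a half-line can be peeled off each cone along the ``radial'' direction: writing $\lambda_{\min}(P)\le\lambda_{\max}(P)$ for the smallest and largest eigenvalues of a symmetric matrix $P$, the assignment $P\mapsto(P-\lambda_{\min}(P)\mathrm{Id},\,\lambda_{\min}(P))$ is a homeomorphism $\mathcal{C}_+\xrightarrow{\sim}\partial\mathcal{C}_+\times[0,\infty)$ with inverse $(Q,s)\mapsto Q+s\,\mathrm{Id}$, and symmetrically $\mathcal{C}_-\cong\partial\mathcal{C}_-\times(-\infty,0]$ via $\lambda_{\max}$. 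Since $\partial\mathcal{C}_-=-\partial\mathcal{C}_+$ and $\sim$ glues $P\in\partial\mathcal{C}_+$ to $-P\in\partial\mathcal{C}_-$, the two half-cylinders are joined along $\partial\mathcal{C}_+\times\{0\}$ by the identity, so $\M\cong\partial\mathcal{C}_+\times\R$; moreover the traceless-part map $Q\mapsto Q-\tfrac13\operatorname{tr}(Q)\mathrm{Id}$ is a homeomorphism $\partial\mathcal{C}_+\xrightarrow{\sim}{\rm Sym}_0(\R^3)$, with continuous inverse $A\mapsto A-\lambda_{\min}(A)\mathrm{Id}$ (the shift making the smallest eigenvalue vanish, which keeps the matrix positive semidefinite and singular). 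Composing these maps, the candidate homeomorphism $\Phi\colon\M\to{\rm Sym}_0(\R^3)\times\R$ is
\[
\Phi([P])=\bigl(P-\tfrac13\operatorname{tr}(P)\mathrm{Id},\ \lambda_{\min}(P)\bigr)\ \text{for }P\in\mathcal{C}_+,\qquad
\Phi([P])=\bigl(-P+\tfrac13\operatorname{tr}(P)\mathrm{Id},\ \lambda_{\max}(P)\bigr)\ \text{for }P\in\mathcal{C}_-.
\]

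I would then verify directly that $\Phi$ is a homeomorphism, the chain above serving only as motivation. The two clauses agree on $\mathcal{C}_+\cap\mathcal{C}_-=\{0\}$ and the eigenvalue functions are continuous on symmetric matrices, so by the pasting lemma $\Phi$ is continuous on $\mathcal{C}_+\cup\mathcal{C}_-$; as $\lambda_{\min}$ vanishes on $\partial\mathcal{C}_+$ and $\lambda_{\max}$ on $\partial\mathcal{C}_-$, we get $\Phi(P)=\Phi(-P)$ for $P\in\partial\mathcal{C}_+$, so $\Phi$ descends to $\M$. It is injective: two representatives in $\mathcal{C}_+$ with the same image have the same traceless part and the same $\lambda_{\min}$, which together fix $\operatorname{tr}(P)$ and hence $P$ (symmetrically on $\mathcal{C}_-$); and a $\mathcal{C}_+$- and a $\mathcal{C}_-$-representative with equal images must have vanishing second component, hence both be boundary matrices, hence $\sim$-identified. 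It is surjective with continuous inverse $\Phi^{-1}(A,\lambda)=[A-\lambda_{\min}(A)\mathrm{Id}+\lambda\,\mathrm{Id}]$ for $\lambda\ge0$ and $[-A+\lambda_{\min}(A)\mathrm{Id}+\lambda\,\mathrm{Id}]$ for $\lambda\le0$ (a positive, resp.\ negative, semidefinite matrix with the prescribed traceless part and extreme eigenvalue); the two formulas coincide at $\lambda=0$ because the matrices are then negatives of one another, and each is continuous, so the pasting lemma applies once more.

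Finally I would match the filtrations. Clearly $\Phi(M_3)=S_3$ and $\Phi(M_0)=S_0$, and since $\mathrm{rk}(P)\le 2$ is equivalent to $\det(P)=0$, the stratum $M_2$ is the set of classes of $\partial\mathcal{C}_+\cup\partial\mathcal{C}_-$, on which the second coordinate of $\Phi$ is zero, so $\Phi(M_2)={\rm Sym}_0(\R^3)\times\{0\}=S_2$. A class in $M_1$ is represented by a rank-$\le1$ positive semidefinite matrix $vv^{t}$, with eigenvalues $\{|v|^2,0,0\}$; its traceless part therefore has eigenvalues $\{2\nu,-\nu,-\nu\}$ with $\nu=|v|^2/3\ge 0$, and evaluating $\det$ and ${\rm min}$ on matrices of this shape shows that these traceless parts fill exactly the locus defining $S_1$, namely (inside $\{\lambda=0\}$) the traceless matrices having a repeated eigenvalue which is the smaller of the two — equivalently $\det\ge 0$. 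Hence $\Phi(M_1)=S_1$, and $\Phi$ carries $M_n$ onto $S_n$ for every $n$, i.e.\ it is a weakly stratified isomorphism. The only genuinely delicate point — everything else being bookkeeping with the quotient topology and continuity of eigenvalues — is this last identification of the singular stratum: one must check that the rank-one locus maps precisely onto $S_1$, thereby reproducing the non-generic dimension drop from $\dim S_2=5$ to $\dim S_1=3$ observed above.
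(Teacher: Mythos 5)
Your map is exactly the paper's global chart (the paper's $\phi_{\pm}$ with the smallest-modulus eigenvalue, which is $\lambda_{\min}$ on $\mathcal{C}_+$ and $\lambda_{\max}$ on $\mathcal{C}_-$), and your verification --- gluing along $\partial\mathcal{C}_+\sim\partial\mathcal{C}_-$, explicit continuous inverse, and matching of the strata $M_n\leftrightarrow S_n$ --- follows the paper's proof essentially step for step, so the approach is the same and correct. The only differences are minor and routine to fill in: the paper additionally checks continuity of the chart against the quotient topology of $M(3,\R)/SO(3)$ via the formula $M\mapsto \mathrm{sgn}(\det M)\sqrt{M^tM}$, and it identifies $S_1$ by solving the resulting sextic in the eigenvalue ratio explicitly, where you instead invoke the repeated-eigenvalue (vanishing-discriminant) characterization of that locus.
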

\begin{proof}
     To provide the topological manifold structure, we specify a bijective map $\phi:\M\to{\rm Sym}_0(\R^3)\times \R$, which, using the isomorphism ${\rm Sym}_0(\R^3)\times \R\cong \R^6$, gives us a global chart for $\M$. Specifically, we are going to construct a global chart for $\mathcal{C}_{+}$ and $\mathcal{C}_-$, showing that the gluing of them gives us a global chart for the whole $\mathcal{M}$. We define the following maps
    \begin{align*}
        \phi_+:\mathcal{C}_{+}&\to {\rm Sym}_0(\R^3)\times \R_+;\\
        P&\mapsto \left(P-\tfrac{1}{3}tr(P)\mathrm{Id}_3, \lambda_P\right),
    \end{align*}
    where $\lambda_P$ is the eigenvalue of $P$ with the smallest modulus $|\lambda_P|$. This map is bijective with an inverse
    \begin{align*}
        \phi_+^{-1}:{\rm Sym}_0(\R^3)\times \R_+&\to\mathcal{C}_{+};\\
        (A,\lambda)&\mapsto A+(\lambda-\mu_A)\mathrm{Id}_3,
    \end{align*}
    where $A$ is a null-trace symmetric $3\times 3$ matrix and $\mu_A$ is its smallest eigenvalue (surely non-positive). Hence, this map defines a global chart for $\mathcal{C}_{+}$.\\
    With a change of sign, we obtain the global chart for $\mathcal{C}_{-}$
    \[\begin{matrix}
        \phi_-:&\mathcal{C}_{-}&\to &{\rm Sym}_0(\R^3)\times \R_-;\\
        &P&\mapsto &\left(-P+\tfrac{1}{3}tr(P)\mathrm{Id}_3, \lambda_P\right).\\
        & & & \\
        \phi_-^{-1}:&{\rm Sym}_0(\R^3)\times \R_-&\to&\mathcal{C}_{-};\\
        &(A,\lambda)&\mapsto& -A+(\lambda+\mu_A)\mathrm{Id}_3.
    \end{matrix}\]
    Remains to show that the gluing of these two maps is well-defined. Considering $P\in\partial\mathcal{C}_{+}$ and so $-P\in\partial\mathcal{C}_{-}$, hence $\lambda_P=0=\lambda_{-P}$, thus
    \[\phi_-(-P)=\left(-(-P)+\tfrac{1}{3}tr(-P)\mathrm{Id}_3,0\right)=\left(P-\tfrac{1}{3}tr(P)\mathrm{Id}_3,0\right)=\phi_+(P).\]
    We can check that the projection to the quotient $\pi:M(3,\R)\to\mathcal{M}$ is continuous with respect to the global chart $(\mathcal{M},\phi)$. Let us consider the map $M\mapsto\mathrm{sgn}(\det(M))\sqrt{M^tM}$, where $\mathrm{sgn}(\det(M))$ can be $\pm1$ indifferently when $\det(M)$ vanishes, and noticing $\lambda_{-P}=-\lambda_P$ in general, we obtain
\begin{equation}
    \phi\circ \pi(M)=\left(\sqrt{M^tM}-\tfrac{1}{3}tr(\sqrt{M^tM})\mathrm{Id}_3,\,\mathrm{sgn}(\det(M))\lambda_{\sqrt{M^tM}}\right),
\end{equation}
    which is a continuous map.

    \medskip

    We now prove that the strata are mapped appropriately. Evidently, $\phi(M_3)=S_3$ and $\phi(M_0)=S_0$. Moreover, it is immediate to see that $\phi(M_2)\subset S_2$. Conversely, ${\rm rk}(\phi^{-1}(A))\leq 2$, indeed, the eigenvector of $A$ with the smallest eigenvalues $\mu_A$ is in the kernel of $\phi^{-1}(A)=A-\mu_A\mathrm{Id}_3$, then $\phi(M_2)= S_2$.
    
    Let us now consider $P\in M_1$, given a basis of eigenvectors $v_0,v_1,v_2$, with $v_1,v_2\in{ ker}(P)$ and $v_o$ with eigenvalues $\lambda\geq 0$ (without loss in generality), they are eigenvectors of $\phi(P)=P-\frac{1}{3}tr(P)\mathrm{Id}_3$ with eigenvalues $-\frac{\lambda}{3}$ and $\frac{2\lambda}{3}$, respectively. Hence, $\det(\phi(P))=\frac{2}{27}\lambda^3\geq 0$ and ${\rm min}(\phi(P))=-\frac{\lambda^2}{3}$, from which $27(\det(A))^2=-4({\rm min}(A))^3$. Then, $\phi(M_1)\subset S_1$\\
    Let $A\in S_1$, its eigenvalues $\mu_1,\mu_2,\mu_3$ satisfy
    \begin{equation*}
        \mu_1+\mu_2+\mu_3=0,\,\text{ and }\ \tfrac{1}{4}(\mu_1\mu_2\mu_3)^2=-\tfrac{1}{27}(\mu_1\mu_2+\mu_2\mu_3+\mu_3\mu_1)^3
    \end{equation*}
    Plugging the first into the second, we obtain an equation
    \[4\mu_1^6+12\mu_1^5\mu_2-3\mu_1^4\mu_2^2-26\mu_1^3\mu_2^3-3\mu_1^2\mu_2^4+12\mu_1\mu_2^5+4\mu_2^6=0.\]
    A possible solution is $\mu_2=0$, in this case $\det(A)=0$, and so ${\rm min}(A)=0$ and $tr(A)=0$, imposing that $A=0$. Considering $\mu_2\neq 0$, we can introduce the variable $t=\mu_1/\mu_2$, and the previous equation reads
    \[4t^6+12t^5-3t^4-26t^3-3t^2+12t+4=0.\]
    This polynomial equation has three roots, each with multiplicity $2$: $t=1$, $t=-2$, $t=-\frac{1}{2}$, providing three similar cases: $\mu_1=\mu_2=-\frac{1}{2}\mu_3$, $\mu_2=\mu_3=-\frac{1}{2}\mu_1$, and $\mu_3=\mu_1=-\frac{1}{2}\mu_2$. The condition $\det(A)>0$ forces the two equal eigenvalues to be negative. Thus, the spectrum is 
    \[\sigma(A)=\{-\mu,-\mu,2\mu\}\ \text{ with }\ \mu\geq 0.\]
    Now, $\phi^{-1}(A)$ is given by $A+\mu\mathrm{Id}_3$ which has rank almost $1$. Thus, $\phi(M_1)=S_1$.
\end{proof}

As part of the proof, we get the following corollary
\begin{corollary}
    The moduli space $\M$ is a topological manifold homeomorphic to $\R^6$.
\end{corollary}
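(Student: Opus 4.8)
The plan is to upgrade the set-theoretic bijection $\phi:\M\to{\rm Sym}_0(\R^3)\times\R$ constructed in the preceding proposition to a homeomorphism; since ${\rm Sym}_0(\R^3)\times\R$ is a six-dimensional real vector space, any linear isomorphism with $\R^6$ then completes the argument. Recall that $\M$ carries the quotient topology coming from $\pi:M(3,\R)\to M(3,\R)/SO(3)=\M$. Half of the homeomorphism claim is already contained in the proposition: there it is checked that $\phi\circ\pi:M(3,\R)\to{\rm Sym}_0(\R^3)\times\R$ is continuous, and the universal property of the quotient topology immediately upgrades this to continuity of $\phi$ itself. So the only thing that remains is to prove that $\phi^{-1}$ is continuous.

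For this I would produce an explicit continuous lift of $\phi^{-1}$ along $\pi$. Set $\Psi_{+}(A,\lambda)=A+(\lambda-\mu_A)\mathrm{Id}_3$ for $(A,\lambda)\in{\rm Sym}_0(\R^3)\times\R_{\geq 0}$ and $\Psi_{-}(A,\lambda)=-A+(\lambda+\mu_A)\mathrm{Id}_3$ for $(A,\lambda)\in{\rm Sym}_0(\R^3)\times\R_{\leq 0}$, where $\mu_A$ denotes the smallest eigenvalue of $A$. The decisive analytic fact is that $A\mapsto\mu_A=\min_{\|v\|=1}\langle Av,v\rangle$ is continuous on ${\rm Sym}_0(\R^3)$, being a pointwise minimum of a family of functions depending linearly, hence continuously, on $A$. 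Consequently $\Psi_{\pm}$ are continuous, with images in $\mathcal{C}_{+}$ and $\mathcal{C}_{-}$ respectively, and they recover the maps $\phi_{+}^{-1},\phi_{-}^{-1}$ of the proposition. The two maps ${\rm Sym}_0(\R^3)\times\R_{\geq 0}\to\M$ and ${\rm Sym}_0(\R^3)\times\R_{\leq 0}\to\M$ obtained by post-composing with $\pi$ agree on the common slice $\{\lambda=0\}$, because there $\Psi_{-}(A,0)=-\Psi_{+}(A,0)$ with $\Psi_{+}(A,0)\in\partial\mathcal{C}_{+}$, so the two matrices are $SO(3)$-equivalent (equivalently, identified under the relation $\sim$ of the Lemma). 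Since ${\rm Sym}_0(\R^3)\times\R_{\geq 0}$ and ${\rm Sym}_0(\R^3)\times\R_{\leq 0}$ are closed and cover ${\rm Sym}_0(\R^3)\times\R$, the pasting lemma yields a continuous map ${\rm Sym}_0(\R^3)\times\R\to\M$, which by construction is $\phi^{-1}$.

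Putting the two halves together, $\phi$ is a homeomorphism and $\M\cong{\rm Sym}_0(\R^3)\times\R\cong\R^6$. The only genuinely delicate point is the continuity of the minimal-eigenvalue function together with the verification that the two branches glue along $\{\lambda=0\}$ at the level of $\M$ rather than merely at the level of representatives; everything else is formal, being either the universal property of the quotient topology or the explicit formulas already recorded in the proof of the proposition. I expect this gluing/eigenvalue step to be the main — and essentially the only — obstacle, and it is a mild one, since continuity of eigenvalues of real symmetric matrices is classical and the gluing takes place along a closed subset.
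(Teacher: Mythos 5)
Your proposal is correct and follows the same route as the paper: the corollary is read off from the global chart $\phi:\M\to{\rm Sym}_0(\R^3)\times\R\cong\R^6$ constructed in the preceding proposition, with continuity of $\phi$ obtained, exactly as you say, from the continuity of $\phi\circ\pi$ and the universal property of the quotient topology. You actually go a bit further than the printed argument, which stops at checking $\phi\circ\pi$ is continuous: your explicit continuous lifts $\Psi_{\pm}$, the continuity of the smallest-eigenvalue function $A\mapsto\mu_A$, and the pasting along the closed slice $\{\lambda=0\}$ (where $\Psi_{+}(A,0)$ and $\Psi_{-}(A,0)=-\Psi_{+}(A,0)$ lie in the same $SO(3)$-orbit) supply the continuity of $\phi^{-1}$ with respect to the quotient topology, a point the paper leaves implicit.
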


\section{Axial symmetry: a peculiar case}
This case represents a peculiar symmetry in physics, characterized by the stabilizer being the abelian group \( U(1) \). The peculiarity of this case is that there are two possible inequivalent choices for the isotropy homomorphism; however, they provide the same classification of homogeneous connections.\\
Since we do not fix the group $G$ nor the action on $M$, we can figure out the isotropic homomorphism by looking for injective homomorphisms in the set of inequivalent homomorphisms $\mathrm{Hom}(U(1), SO(3))/SO(3)$, with $SO(3)$ acting via conjugation. This classification is provided by the maps into the maximal tori of $SO(3)$. Since the maximal tori are conjugate to each other, we can pick a representative as the map into the natural $SO(2)$ subgroup. The only two elements of the set represented by injective maps are 
\begin{align}
\nonumber
     \lambda_{\pm}:&U(1)\to SO(3);\\
    &e^{\pm i\theta}\mapsto \begin{pmatrix}
        1 & 0 & 0\\
        0 & \cos{\theta} & -\sin{ \theta}\\
        0 & \sin{ \theta} & \cos{\theta}
    \end{pmatrix}
\end{align}
Using the equivariant map between $\so{3}$ and $\R^3$ we obtain two vector spaces associated to homogeneous metric-compatible connections, one for each isotropy homomorphism 
\[\vec{\A}_{\pm}^{ax}=\{\Lambda\in \mathrm{Hom}(\R^3,\R^3)\ |\ \Lambda\circ \lambda_{\pm}(h)=\lambda_{\pm}(h)\circ\Lambda,\text{ for all }  h\in H\}.\]
Both these vector spaces read as the space of linear maps on $\R^3$ that commute with the rotation about the $x$ axis. Such matrices are the multiples of rotational matrices on the plane $(y,z)$, thus
\begin{equation}
\label{connAx}
    \A^{ax}\cong \left\{\begin{pmatrix}
        a & 0 & 0\\
        0 & b & -c\\
        0 & c & b
    \end{pmatrix}\ \text{ s.t. }\ a,b,c\in\R \right\}\cong \R^3.
\end{equation}
As anticipated, the group $\mathcal{G}^G$ is independent of the group $G$ and depends only on the stabilizer $H=U(1)$. Then, we call the group in this axial symmetric case $\mathcal{G}^{ax}$.
\begin{proposition}
\label{gaugeax}
    $\mathcal{G}^{ax}$ is independent of the choice of isotropy homomorphism, and it is isomorphic to $SO(2)$.
\end{proposition}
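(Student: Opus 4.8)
The plan is to follow the proof of Proposition~\ref{gaugehom} almost verbatim, the only genuinely new input being the computation of the centralizer of the isotropy image $\lambda_{\pm}(U(1))$ inside $SO(3)$. Fix a frame $u_o$ over $o$ and the corresponding point $p_o\in\OF{M}$, normalised so that $\tilde L_h p_o=p_o\,\lambda_{\pm}(h)$ for all $h\in U(1)$. As in Proposition~\ref{gaugehom}, any $f\in\mathcal{G}^{ax}$ is completely determined by the unique $b\in SO(3)$ with $f(p_o)=p_o\,b$: for an arbitrary $p'\in\OF{M}$, transitivity of $G$ on $M$ yields some $g\in G$ and a unique $a\in SO(3)$ with $\tilde L_g p_o=p'a$, and then $f(p')=f(\tilde L_g p_o\cdot a^{-1})=\tilde L_g f(p_o)\cdot a^{-1}=p'\,a b a^{-1}$.

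First I would extract the constraint on $b$. Imposing the $G$-invariance $f\circ\tilde L_h=\tilde L_h\circ f$ at $p_o$ for $h\in U(1)$, and using $\tilde L_h p_o=p_o\lambda_{\pm}(h)$ together with the right-equivariance $f(p_o s)=f(p_o)s$ of a vertical automorphism, the two sides reduce to $p_o$ acted on by $b\,\lambda_{\pm}(h)$ and by $\lambda_{\pm}(h)\,b$ respectively; hence $b$ commutes with every element of $\lambda_{\pm}(U(1))$, i.e. $b\in Z_{SO(3)}\!\big(\lambda_{\pm}(U(1))\big)$. Conversely, for any such $b$ the formula $f(p'):=p'\,aba^{-1}$ is well-defined (two pairs $(g,a)$ realising $\tilde L_g p_o=p'a$ differ by an element of $U(1)$ acting through $\lambda_{\pm}$, which is absorbed precisely because $b$ centralises $\lambda_{\pm}(U(1))$), smooth (use a local section of $G\to M$), and a $G$-invariant vertical automorphism; moreover $f\mapsto b$ is a group isomorphism since $f_1\circ f_2$ corresponds to $b_1b_2$. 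Thus $\mathcal{G}^{ax}\cong Z_{SO(3)}\!\big(\lambda_{\pm}(U(1))\big)$.

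It remains to identify this centralizer. Both $\lambda_+$ and $\lambda_-$ have the same image, namely the group of rotations about the first coordinate axis, so $Z_{SO(3)}(\lambda_{\pm}(U(1)))$ — and hence $\mathcal{G}^{ax}$ — is manifestly independent of the choice between $\lambda_+$ and $\lambda_-$; this settles the independence statement. To compute it: if $O\in SO(3)$ commutes with every rotation $R_x(\theta)$ about the first axis, then $Oe_1$ is fixed by all $R_x(\theta)$, forcing $Oe_1=\pm e_1$; orthogonality then makes $O=\mathrm{diag}(\varepsilon,B)$ with $\varepsilon=\pm1$, $B\in O(2)$ and $\varepsilon\det B=1$, and since a reflection conjugates the planar rotation $R(\theta)$ to $R(-\theta)$, commuting with all of them forces $B\in SO(2)$, whence $\varepsilon=1$ and $O$ is itself a rotation about the first axis. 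Therefore $Z_{SO(3)}(\lambda_{\pm}(U(1)))=\lambda_{\pm}(U(1))\cong SO(2)$, which gives the claim.

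The proof has no single difficult step; the main thing to watch is the bookkeeping in the converse direction — checking that $p'\mapsto p'aba^{-1}$ is independent of the auxiliary choices and genuinely defines a smooth $G$-equivariant automorphism — while the only new mathematical content, the centralizer computation, is elementary. A conceptually cleaner (but longer) alternative would be to observe that the orbit $Q=\tilde L_G\,p_o$ is a reduction of $\OF{M}$ to the structure group $\lambda_{\pm}(U(1))\cong SO(2)$ on which $G$ acts simply transitively, to note that every $f\in\mathcal{G}^{ax}$ preserves $Q$ (its value at $p_o$ already lies in $Q$ by the constraint on $b$), and to identify $\mathcal{G}^{ax}$ with the $G$-invariant vertical automorphisms of $Q$, which by abelianness of $SO(2)$ are exactly the constant ones.
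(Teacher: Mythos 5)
Your proposal is correct and follows essentially the same route as the paper: determine $f\in\mathcal{G}^{ax}$ by its value $b$ at a preferred frame $p_0^{\pm}$ with $\tilde L_h p_0^{\pm}=p_0^{\pm}\lambda_{\pm}(h)$, derive from $G$-invariance that $b$ centralizes $\lambda_{\pm}(U(1))$, and identify that centralizer with $SO(2)$ (the paper invokes that $\lambda_{\pm}(H)$ is a maximal torus, you compute the centralizer directly). Your additional verification that every centralizing $b$ actually yields a well-defined $G$-invariant vertical automorphism (surjectivity of $f\mapsto b$) is a welcome, correct completion of a step the paper leaves implicit.
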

\begin{proof}
    The proof is similar to the proof of Prop.\ref{gaugehom}. In this case, we have a preferred point $p_0^{\pm}\in\OF{M}$ in the fiber over $o\in M$ for each choice of the isotropic homomorphism $\lambda_{\pm}$, such that $\tilde{L}_hp_0^{\pm}=p_0^{\pm}\lambda_{\pm}(h)$ for all $h\in H$. Considering the homomorphism $f\mapsto b_{p_0^{\pm}}$, where $f(p_0^{\pm})=p_0^{\pm}\,b_{p_0^{\pm}}$, we have a constraint on the values of $b_{p_0^{\pm}}$:
    \begin{align*}
        p_0^{\pm}\, b_{p_0^{\pm}}\lambda_{\pm}(h)=f(\tilde L_h p_0^{\pm})=\tilde L_h f(p_0^{\pm})=p_0^{\pm} \lambda_{\pm}(h)b_{p_0^{\pm}},
    \end{align*}
    namely $b_{p_0^{\pm}}$ must commute with $\lambda_{\pm}(h)\in SO(3)$ for any $h\in H$. Thus, $b_{p_0^{\pm}}\in \lambda_{\pm}(H)\cong SO(2)$ because $\lambda_{\pm}(H)$ is a maximal torus in $SO(3)$.
\end{proof}
As a consequence, $\mathcal{G}^{ax}$ acts via the adjoint representation of $\lambda(H)\cong SO(2)\subset SO(3)$ on the $\so{3}$ factor, namely via left matrix multiplication on $\A^{ax}$ as expressed in \eqref{connAx}. From which, we can compute the moduli space:
\begin{proposition}
    $\A^{ax}/\mathcal{G}^{ax}$ is a smooth manifold with boundary diffeomorphic to $\R\times\R_{\geq0}.$
\end{proposition}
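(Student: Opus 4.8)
The plan is to make the $\mathcal{G}^{ax}$-action on $\A^{ax}$ completely explicit in the coordinates $(a,b,c)$ of \eqref{connAx}, observe that it leaves $a$ untouched and merely rotates the pair $(b,c)$, and then pass to the quotient one factor at a time.

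By Proposition~\ref{gaugeax} the group $\mathcal{G}^{ax}$ is $\lambda_{\pm}(H)\cong SO(2)\subset SO(3)$, the rotations about the $x$-axis, acting on $\A^{ax}$ by left matrix multiplication once $\A^{ax}$ is realised as the space of matrices \eqref{connAx}. First I would left-multiply the matrix in \eqref{connAx} by $R_\theta:=\lambda_+(e^{i\theta})$: the first row $(a,0,0)$ is unchanged, so $a$ is a gauge invariant, while the lower-right block $\left(\begin{smallmatrix} b & -c\\ c & b\end{smallmatrix}\right)$---which is precisely the complex number $b+ic$---gets multiplied on the left by $\left(\begin{smallmatrix}\cos\theta & -\sin\theta\\ \sin\theta & \cos\theta\end{smallmatrix}\right)$, i.e.\ by $e^{i\theta}$. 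Hence $(b,c)\mapsto(b\cos\theta-c\sin\theta,\ b\sin\theta+c\cos\theta)$ is the standard rotation action of $SO(2)$ on $\R^2$, and as a $\mathcal{G}^{ax}\cong SO(2)$-space we get $\A^{ax}\cong\R\times\R^2$ with the trivial action on the first factor and the rotation action on the second. Carrying out the same computation with $\lambda_-$ in place of $\lambda_+$ yields the conjugate rotation, hence the same orbits, which is exactly why the choice of isotropy homomorphism is irrelevant, in accordance with Proposition~\ref{gaugeax}.

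Next I would quotient: $\A^{ax}/\mathcal{G}^{ax}\cong\R\times(\R^2/SO(2))$. The orbit space $\R^2/SO(2)$ is homeomorphic to $\R_{\geq0}$ via $(b,c)\mapsto\sqrt{b^2+c^2}$ (equivalently via the generator $b^2+c^2$ of the invariant ring $\R[b,c]^{SO(2)}$), the free orbits $b^2+c^2>0$ mapping to interior points and the unique fixed point $b=c=0$ to the boundary point. Transporting the smooth structure of $\R_{\geq0}$ across this homeomorphism makes $\R^2/SO(2)$ a smooth $1$-manifold with boundary, so $\A^{ax}/\mathcal{G}^{ax}$ becomes a smooth $2$-manifold with boundary diffeomorphic to $\R\times\R_{\geq0}$, with boundary $\R\times\{0\}$ equal to the image of the connections \eqref{connAx} having $b=c=0$.

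The only step that is not purely formal is endowing the quotient with a smooth structure, since orbit spaces of proper actions are in general only stratified spaces. Here this causes no difficulty because the $SO(2)$-action on $\R^2$ has a single fixed point, all other orbits being free of codimension one, so the orbit space is literally a half-line; moreover the conclusion is robust against the precise choice of compatible smooth structure, as any smooth $1$-manifold with boundary homeomorphic to $[0,\infty)$ is diffeomorphic to $[0,\infty)$. Everything else---the explicit matrix multiplication and the $\lambda_\pm$-independence---is routine and already essentially contained in Proposition~\ref{gaugeax} and in the derivation of \eqref{connAx}, so I would only need to record it.
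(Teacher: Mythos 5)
Your proposal is correct and follows essentially the same route as the paper: both identify the $\mathcal{G}^{ax}\cong SO(2)$-action as fixing $a$ and rotating the pair $(b,c)$, and then quotient by passing to the radial coordinate, yielding $\R\times\R_{\geq0}$ (the paper phrases this via polar coordinates and the representative $\theta=0$). Your extra remarks on transporting the smooth structure and on the $\lambda_{\pm}$-independence only make explicit what the paper leaves implicit.
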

\begin{proof}
    Using polar coordinates $b=r\cos{\theta},c=r\sin{\theta}$, every element of $\A^{ax}$ can be written as 
    \[\begin{pmatrix}
        a & 0 & 0\\
        0 & r\cos{\theta} & -r\sin{\theta}\\
        0 & r\sin{\theta} & r\cos{\theta}
    \end{pmatrix},\ a,\theta\in \R,\,r\in\R_{\geq 0}.\]
    Varying $\theta$, we move along an orbit, and we can clearly see that every orbit has a different representative for $\theta=0$, obtaining:
    \[\A^{ax}/\mathcal{G}^{ax}\cong \left\{\begin{pmatrix}
        a & 0 & 0\\
        0 & r & 0\\
        0 & 0 & r
    \end{pmatrix}\ |\ a\in\R,\,r\in\R_{\geq 0} \right\}\cong \R\times \R_{\geq 0}\]
\end{proof}
Nevertheless, on such a homogeneous space, a homogeneous spin structure does not always exist (see \cite{Lawn_2022} for the case $U(2)/U(1)$). Hence, a proper notion of homogeneous spin connection can not be found. However, we can look at the space of homogeneous connections on the spin bundle $\SP{M}$.

\medskip

The formula \ref{moduli} gives us a rapid way to compute the moduli space of homogeneous connections on a spin bundle. First of all, we need to compute the quotient space $\mathrm{Hom}(U(1),SU(2))/SU(2)$, which counts the inequivalent lifts on $\SP{M}$ of the $G$-action on $M$. The classification is analogous to the $SO(3)$ homomorphism. Looking at a fixed maximal torus in $SU(2)$ to find the representatives
\begin{equation}
\begin{matrix}
        \mu_n:& U(1)&\to &SU(2);\\
        &e^{i\theta}& \mapsto&\begin{pmatrix}
        e^{-in\theta} & 0\\
        0 & e^{in\theta}
    \end{pmatrix},
    \end{matrix}
    \end{equation}
we describe countable distinguishable classes, each one represented by $\mu_n$ with $n\in\mathbb{Z}$.
We can now compute the moduli space of homogeneous $SU(2)$ connections in this axially symmetric case.
\begin{proposition}
    The moduli space $\mathcal{M}^{ax}$ has countable connected components, each diffeomorphic to $\R$.
\end{proposition}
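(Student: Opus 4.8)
The plan is to fibre $\mathcal{M}^{ax}$ over the conjugacy classes of the homomorphism $\mu$ and to analyse each fibre by a Wang-type count. Since $SU(2)$ acts on a pair by $(\mu,\Lambda)\mapsto(g\mu g^{-1},\Ad_g\circ\Lambda)$, the class $[\mu]$ is an invariant of a point of $\mathcal{M}^{ax}$, and by the classification recalled above it ranges over the countable set $\{[\mu_n]\}_n$. Moreover, identifying a homomorphism with its generator $Y=\tfrac{d}{d\theta}\big|_0\mu(e^{i\theta})\in\su{2}$ presents $\mathrm{Hom}(U(1),SU(2))$ as the set of $Y$ with $\exp(2\pi Y)=1$, i.e. (in a suitable Euclidean structure on $\su{2}$) a union of concentric spheres; these spheres are its connected components, so the conjugacy classes are clopen and hence so are the corresponding subsets $\mathcal{M}^{ax}_n\subset\mathcal{M}^{ax}$. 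It therefore suffices to identify each $\mathcal{M}^{ax}_n$ and to check that it is connected: those will then be exactly the connected components of $\mathcal{M}^{ax}$.

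Fixing the representative $\mu=\mu_n$, the residual gauge freedom is the stabiliser of $\mu_n$ under conjugation, i.e. the centraliser $N_n$ of $\mu_n(U(1))$ in $SU(2)$; the standard slice argument then gives $\mathcal{M}^{ax}_n\cong V_n/N_n$, where $V_n=\{\Lambda\in\mathrm{Hom}_\R(\R^3,\su{2})\ :\ \Lambda\circ\lambda_+(h)=\Ad_{\mu_n(h)}\circ\Lambda\ \ \forall h\in U(1)\}$ and $N_n$ acts by $\Lambda\mapsto\Ad_g\circ\Lambda$. I would compute $V_n$ by decomposing both sides into real $U(1)$-isotypic components: under $\lambda_+$ the space $\R^3$ splits as a trivial line (the rotation axis) plus a weight-one plane, while under $\Ad\circ\mu_n$ the algebra $\su{2}$ splits as a trivial line (the Cartan direction) plus a weight-$2n$ plane. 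For $n\neq 0$ the nontrivial weights $1$ and $2n$ never agree, even up to sign, so by a Schur-type argument an equivariant map must kill the rotation plane and send the axis into the Cartan line; thus $V_n$ is the real line spanned by the map $\Lambda_0$ carrying a unit vector of the axis to a fixed generator $\xi$ of the Cartan subalgebra. (Equivalently one writes $\Lambda$ in $1\oplus2$ block form with respect to these splittings and solves the linear equations directly.)

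To conclude the case $n\neq 0$ one checks that $N_n$ acts trivially on $V_n$. For such $n$ the image $\mu_n(U(1))$ is a maximal torus $T\subset SU(2)$, whose centraliser in $SU(2)$ is $T$ itself, and every $\Ad_t$ with $t\in T$ fixes the Cartan subalgebra pointwise, so $\Ad_t\circ\Lambda_0=\Lambda_0$. Hence $\mathcal{M}^{ax}_n\cong V_n\cong\R$ for every $n\neq0$; being clopen and connected, these are connected components of $\mathcal{M}^{ax}$, and there are countably many of them, all diffeomorphic to $\R$.

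The step I expect to be the main obstacle is the exceptional stratum $n=0$, which behaves differently from all the others: there $\mu_0$ is trivial, the equivariance condition merely forces $\Lambda$ to annihilate the rotation plane so that $V_0\cong\su{2}\cong\R^3$, and $N_0$ is all of $SU(2)$ acting by the adjoint representation, whence $\mathcal{M}^{ax}_0\cong\R^3/SO(3)$. Analysing this last quotient — whose only invariant is $|\xi|\ge0$ — and equipping it with a manifold structure requires the same polar-decomposition bookkeeping used for the cones $\mathcal{C}_{\pm}$ in the Bianchi case; this is precisely where the description of the $n=0$ component as a copy of the real line must be supplied (or, if one is strict about boundary points, amended), and it is the delicate point of the argument.
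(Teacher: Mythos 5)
Your strategy coincides with the paper's own: fix the representative $\mu_n$, solve the equivariance condition $\Lambda\circ\lambda_{\pm}(h)=\Ad_{\mu_n(h)}\circ\Lambda$, and quotient by the residual stabiliser of $\mu_n$. The paper performs the middle step by realising $\Ad\circ\mu_n$ on $\su{2}\cong\R^3$ as the rotation $R(2n\theta)$ and solving the matrix equation $\Lambda R(\pm\theta)=R(2n\theta)\Lambda$ directly, which is exactly your weight-$1$ versus weight-$2n$ Schur argument; your remark that the centraliser $T$ fixes the Cartan line pointwise reproduces the paper's observation that the surviving entry of $\Lambda$ is preserved by the residual action. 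Your clopen-ness argument, showing that the subsets with fixed class $[\mu_n]$ really are unions of connected components, is a genuine (and welcome) addition: the paper simply asserts that each class of homomorphisms yields a connected component. A minor point for both you and the paper: $\mu_n$ and $\mu_{-n}$ are conjugate by the Weyl element, so the distinct classes are labelled by $|n|$; this affects neither countability nor the identification of the $n\neq 0$ components with $\R$.

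Your flagged ``delicate point'' at $n=0$ is not something you failed to finish; it is a gap in the paper's own proof. The normal form with a single nonzero entry is forced only for $n\neq 0$: for the trivial homomorphism the condition reads $\Lambda\circ\lambda_{\pm}(h)=\Lambda$, which merely forces $\Lambda$ to annihilate the rotation plane, so the solution space is $V_0\cong\su{2}\cong\R^3$ (take, e.g., $\Lambda(e_1)=e_2$, which satisfies the equivariance but is not of the paper's claimed form), and the residual gauge group is all of $SU(2)$ acting through $\Ad$, i.e. $SO(3)$ on $\R^3$. The component over the trivial class is therefore $\R^3/SO(3)\cong\R_{\geq 0}$, parametrised by $|\Lambda(e_1)|$: a half-line with a boundary point, not a copy of $\R$. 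Consequently this case cannot be ``supplied'' so as to yield $\R$; the statement must be amended for that single component (which is consistent with the paper's general claim that these moduli spaces are topological manifolds possibly with boundary), or the trivial class must be treated separately. With that amendment, your argument for all $n\neq 0$ is complete and correct, and in fact more careful than the published proof.
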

\begin{proof}
    We recall that we have a priori two moduli spaces, each associated with a possible isotropy homomorphism
    \[\mathcal{M}^{ax}_{\pm}=\left\{\begin{matrix}(\mu,\Lambda)\in \mathrm{Hom}(U(1),SU(2))\times \mathrm{Hom}_{\R}(\R^3,\su{2})\ \\ \mbox{s.t. }\ \Lambda\circ \lambda_{\pm}(h)=\Ad_{\mu(h)}\circ \Lambda \  \forall h\in H\end{matrix}\right\}/SU(2).\]
    Using the isomorphism between $\su{2}$ and $\R^3$, we can construct the action $\mu_n(U(1))\subset SU(2)$ on $\R^3$:
    \begin{align*}
    &    \begin{pmatrix}
        e^{-in\theta} & 0\\
        0 & e^{in\theta}
    \end{pmatrix}\begin{pmatrix}
        ix & -y+iz \\
        y+iz & -ix
    \end{pmatrix}\begin{pmatrix}
        e^{in\theta} & 0\\
        0 & e^{-in\theta}
    \end{pmatrix}=\\
    &=\begin{pmatrix}
        ix & (-y+iz)e^{-2ni\theta} \\
        (y+iz)e^{2ni\theta} & -ix
    \end{pmatrix}.
    \end{align*}
    Hence, $\mu_n(U(1))$ acts on the vector $(x,y,z)$ as a rotational matrix \[R(2n\theta)=\begin{pmatrix}
        1 & 0 & 0\\
        0 & \cos{2n\theta} & -\sin{2n\theta}\\
        0 & \sin{2n\theta} & \cos{2n\theta}
    \end{pmatrix}.\]
    This means that, for a fixed $\mu_n$ and $\lambda_{\pm}$, we are looking for endomorphisms $\Lambda$ of $\R^3$ such that $\Lambda R(\pm\theta)=R(2n\theta)\Lambda$. Those $\Lambda$ must be like
    \[\Lambda=\begin{pmatrix}
        c & 0 & 0\\
        0 & 0 & 0\\
        0 & 0 & 0
    \end{pmatrix},\ c\in\R.\]
    Clearly the description is the same for $\lambda_{+}$ or $\lambda_{-}$. This is enough to provide the classification. Considering an equivalent class $[\mu,\Lambda']$, there exists an $a\in SU(2)$ and $n\in \mathbb{Z}$ such that $[\mu,\Lambda']=[\mu_n,\Ad_a\circ\Lambda']$, where $\Ad_a\circ\Lambda'$ is in the type of $\Lambda$ found before. If there exists $a\in SU(2)$ such that $[\mu_n,\Lambda]=[\mu_n,\Ad_a\circ\Lambda]$, then $a$ can be associated to a special orthogonal matrix $O_a$ such that, when applied to $\Lambda$, gives a matrix $\Lambda'=O_a\Lambda$ with all entries zero but the first. However, considering the first column as a vector, it must conserve its modulus, hence the first entry must be the same as $\Lambda$, then $O_a\Lambda=\Lambda$.\\
    In conclusion, every class of homomorphism defines a connected component diffeomorphic to $\R$.
\end{proof}

Here, we observe a clear distinction between the two moduli spaces. This significant difference from the purely homogeneous case arises from the non-existence of a homogeneous spin structure, which prevents the embedding of homogeneous metric-compatible connections into the $SU(2)$ framework. This obstruction is due to the failure of the $G$-action to lift from the orthonormal frame bundle to the spin bundle. In the next section, it will become evident which of the two moduli spaces corresponds to the classification of homogeneous Ashtekar–Barbero–Immirzi–Sen connections as required by the physical theory.

\section{Isotropic spaces}
This class of symmetry is one of the most studied in physics; it is characterized by a stabilizer $H=SO(3)$. The most relevant cases consist of the three simply connected homogeneous spaces $(\R^3,\mathbb{E}_0(3))$, $(\mathbb{S}^3,SO(4))$, and $(\mathbb{H}^3,SO^+(1,3))$. However, as in the previous cases, we can provide a unique treatment dependent only on the stabilizer subgroup.

\medskip

The classification of homogeneous metric-compatible connections starts by noticing that the isotropic homomorphism $\lambda:H\to SO(3)$ is just the identity. Indeed, since $SO(3)$ has no normal subgroup, a homomorphism $\lambda:SO(3)\to SO(3)$ is trivial or bijective. But the isotropic homomorphism must be injective for a reductive homogeneous space. Moreover, all the automorphisms of $SO(3)$ are inner automorphisms, so they are conjugate to each other. Hence, a suitable choice of $p_0$ in the orthonormal frame bundle give us $\lambda=\mathrm{id}_{SO(3)}$. 
Using the version of Wang's theorem for reductive homogeneous spaces, we can compute the set of homogeneous metric-compatible connections $\A^{iso}$, and for any isotropic homogeneous space, we get
\begin{equation}
    \A^{iso}\cong\left\{\Lambda\in\mathrm{Hom}_{\R}(\R^3,\so{3})\ |\ \Lambda\circ R=\Ad_{R}\circ \Lambda,\,\forall R\in SO(3)\right\}.
\end{equation}
The invariant gauge group $\mathcal{G}^{iso}$ is simplified a lot on this class of spaces
\begin{lemma}
    $\mathcal{G}^{iso}$ contains only the identity element
\end{lemma}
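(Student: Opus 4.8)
The plan is to run the same argument used for Proposition~\ref{gaugehom} and Proposition~\ref{gaugeax}, now specialized to $\lambda=\mathrm{id}_{SO(3)}$. First I would fix a frame $p_0\in\OF{M}$ in the fiber over $o$ adapted to the isotropy homomorphism, i.e.\ such that $\tilde L_h p_0=p_0\,\lambda(h)$ for all $h\in H$ (this is the defining property of the point $p_0$ used already in Proposition~\ref{gaugeax}, and it exists precisely because $\lambda$ is the isotropy homomorphism attached to $p_0$). Then, for $f\in\mathcal{G}^{iso}$, write $f(p_0)=p_0\,b_{p_0}$ with a unique $b_{p_0}\in SO(3)$, and impose $G$-invariance at $p_0$.

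The key computation is the short chain
\[
p_0\,\lambda(h)\,b_{p_0}=(\tilde L_h p_0)\,b_{p_0}=\tilde L_h f(p_0)=f(\tilde L_h p_0)=f(p_0\,\lambda(h))=f(p_0)\,\lambda(h)=p_0\,b_{p_0}\,\lambda(h),
\]
valid for every $h\in H$, which forces $b_{p_0}$ to commute with $\lambda(h)$ for all $h\in H$. Since in the isotropic case $\lambda=\mathrm{id}_{SO(3)}$ (as established at the start of this section), $\lambda(H)=SO(3)$, so $b_{p_0}$ lies in the center of $SO(3)$. The center of $SO(3)$ is trivial (the only candidate $-\mathrm{Id}_3$ has determinant $-1$), hence $b_{p_0}=\mathrm{Id}_3$.

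Finally I would propagate this to all of $\OF{M}$ exactly as in the proof of Proposition~\ref{gaugehom}: given any $p'\in\OF{M}$, transitivity of the $G$-action on $M$ provides $g\in G$ and $a\in SO(3)$ with $\tilde L_g p_0=p'a$, and then
\[
f(p')=f(\tilde L_g p_0\,a^{-1})=\tilde L_g f(p_0)\,a^{-1}=\tilde L_g(p_0\,b_{p_0})\,a^{-1}=p'\,a\,b_{p_0}\,a^{-1}=p',
\]
using $b_{p_0}=\mathrm{Id}_3$. Thus $f=\mathrm{id}$ and $\mathcal{G}^{iso}=\{\mathrm{id}\}$. I do not expect a genuine obstacle here: the whole content is the specialization $\lambda(H)=SO(3)$ together with $Z(SO(3))=\{\mathrm{Id}_3\}$, so the only thing to state carefully is the triviality of that center; everything else is a verbatim repetition of the earlier gauge-group arguments.
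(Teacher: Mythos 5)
Your proposal is correct and follows essentially the same route as the paper: derive the constraint that $b_{p_0}$ commutes with $\lambda(H)=SO(3)$ from the computation of Proposition~\ref{gaugeax}, conclude $b_{p_0}=\mathrm{Id}_3$ from the triviality of the center of $SO(3)$, and propagate to all of $\OF{M}$ as in Proposition~\ref{gaugehom}. Your write-up simply spells out the propagation step that the paper leaves implicit.
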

\begin{proof}
    The proof follows from the computation done for Prop.\ref{gaugeax}. Since $b_{p_0}$ must commute with $\lambda(SO(3))$ and $\lambda$ is bijective, $b_{p_0}$ must leave in the center of $SO(3)$, which is trivial. Hence, $b_{p_0}=e$, and so $b_p=e$ for all $p\in P^{SO}$.
\end{proof}
\begin{proposition}
    $\A^{iso}/\mathcal{G}^{iso}$ is diffeomorphic to $\R$
\end{proposition}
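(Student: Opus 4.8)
The plan is to reduce the statement to an application of Schur's lemma, after trivializing the gauge quotient. By the Lemma immediately preceding this Proposition, $\mathcal{G}^{iso}=\{e\}$, so the orbit space $\A^{iso}/\mathcal{G}^{iso}$ coincides with $\A^{iso}$ itself; it therefore suffices to identify $\A^{iso}$ as a smooth manifold. By Wang's theorem, $\A^{iso}$ is an affine space modelled on the real vector space
\[
V=\{\Lambda\in\mathrm{Hom}_{\R}(\R^3,\so{3})\ |\ \Lambda\circ R=\Ad_{R}\circ \Lambda,\ \forall R\in SO(3)\},
\]
so the task is to show $\dim_{\R}V=1$.

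Next I would transport $V$ through the $SO(3)$-equivariant isomorphism $\so{3}\cong\R^3$ recalled in Section~\ref{sec:Rev} (the one intertwining $\Ad$ with the defining representation). This identifies $V$ with $\mathrm{Hom}_{SO(3)}(\R^3,\R^3)$, the space of self-intertwiners of the standard representation of $SO(3)$ on $\R^3$.

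Then comes the key step: the defining representation of $SO(3)$ on $\R^3$ is irreducible and of real type, so by Schur's lemma its endomorphism algebra is exactly $\R\cdot\mathrm{Id}_3$; hence $V\cong\R$. Consequently $\A^{iso}$ is a one-dimensional real affine space, and $\A^{iso}/\mathcal{G}^{iso}=\A^{iso}$ is diffeomorphic to $\R$.

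There is no serious obstacle here. The only point requiring a little care is to invoke the correct real form of Schur's lemma — the endomorphism algebra of an irreducible real representation is a priori one of $\R$, $\C$, $\mathbb{H}$ — and to note that for the standard $SO(3)$-module it is $\R$. To sidestep this entirely one may instead give the short explicit computation: a matrix $\Lambda$ commuting with the three rotations by $\pi$ about the coordinate axes is forced to be diagonal, and commuting with a further generic rotation forces the diagonal entries to coincide, so $\Lambda=c\,\mathrm{Id}_3$ with $c\in\R$.
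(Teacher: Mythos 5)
Your proposal is correct and follows essentially the same route as the paper: trivial gauge group $\mathcal{G}^{iso}$, the equivariant isomorphism $\so{3}\cong\R^3$, and the observation that endomorphisms of $\R^3$ commuting with all of $SO(3)$ are exactly the real multiples of the identity. The only difference is that you justify this last step (via real Schur's lemma, with the correct remark about real type, or the explicit computation), whereas the paper simply asserts it is easy; this is a welcome filling-in, not a different argument.
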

\begin{proof}
    We already discussed that $\mathcal{G}^{iso}$ contains only the identity element. Hence, we just need to discuss $\A^{iso}$. Using the equivariant isomorphism between $\so{3}$ and $\R^3$, we reduce to discussing the linear endomorphisms of $\R^3$ that commute with the rotational matrices, namely, we are looking for $\Lambda:\R^3\to\R^3$ such that $\Lambda\circ R=R\circ \Lambda$ for all $R\in SO(3)$. It is easy to show that they are only multiples of the identity. Thus, the vector space associated to $\A^{iso}$ is canonically $\R$.
\end{proof}
As in the axial-symmetric case, on such spaces does not always exist an invariant homogeneous structure (see \cite{Lawn_2022} for the case $SO(4)/SO(3)$). Hence, we can not define a proper homogeneous spin structure. However, in this case, there is only one homogeneous connection on the spin bundle.
\begin{proposition}
    The moduli space of homogeneous $SU(2)$ connections for isotropic spaces is trivial.
\end{proposition}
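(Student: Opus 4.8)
The plan is to show that the defining constraints of the moduli space force the pair $(\mu,\Lambda)$ to be the trivial one $(\mu_0,0)$, where $\mu_0\colon SO(3)\to SU(2)$ is the constant homomorphism; since $SU(2)$ then acts trivially on this single pair, the moduli space reduces to a point, which is the claimed triviality. Recall that for isotropic spaces the text has already established $H=SO(3)$ and $\lambda=\mathrm{id}_{SO(3)}$.

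The first and main step is to classify $\mathrm{Hom}(SO(3),SU(2))$ and show it contains only $\mu_0$. Given $\mu\in\mathrm{Hom}(SO(3),SU(2))$, consider $\rho\circ\mu\colon SO(3)\to SO(3)$, where $\rho\colon SU(2)\to SO(3)$ is the twofold covering. Since $SO(3)$ is simple, $\rho\circ\mu$ is either trivial or an automorphism. If it were an automorphism, $\mu$ would be injective, hence a smooth isomorphism onto a $3$-dimensional Lie subgroup of $SU(2)$; by connectedness this subgroup would be all of $SU(2)$, forcing $SO(3)\cong SU(2)$, which is false (one is simply connected, the other is not). Hence $\rho\circ\mu$ is trivial, so $\mu$ takes values in $\ker\rho=\{\pm\mathrm{Id}\}$; as $SO(3)$ is connected, the only continuous homomorphism into this discrete group is trivial, so $\mu=\mu_0$.

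With $\mu=\mu_0$ and $\lambda=\mathrm{id}$, the equivariance condition $\Lambda\circ\lambda(h)=\Ad_{\mu(h)}\circ\Lambda$ becomes $\Lambda(Rv)=\Lambda(v)$ for all $R\in SO(3)$ and $v\in\R^3$. For any nonzero $v$, taking $R$ to be the rotation by $\pi$ about an axis orthogonal to $v$ gives $Rv=-v$, so $\Lambda(v)=\Lambda(-v)=-\Lambda(v)$ by linearity, whence $\Lambda(v)=0$; thus $\Lambda=0$. Therefore the set underlying $\mathcal{M}^{iso}$ is the single pair $(\mu_0,0)$, and since $SU(2)$ fixes it, $\mathcal{M}^{iso}$ is trivial. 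The only nonroutine point is the classification of $\mathrm{Hom}(SO(3),SU(2))$ in the first step, which rests on the simplicity and connectedness of $SO(3)$ together with the absence of an $SO(3)$-subgroup in $SU(2)$; the remaining linear-algebra argument is immediate.
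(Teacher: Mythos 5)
Your proposal is correct and follows essentially the same route as the paper: first show $\mathrm{Hom}(SO(3),SU(2))$ contains only the trivial homomorphism (the paper invokes simplicity of $SO(3)$ and the impossibility of an injective map, which you justify more carefully via $\rho\circ\mu$, a dimension count, and simple connectedness), then observe that the equivariance condition with $\mu$ trivial and $\lambda=\mathrm{id}$ forces $\Lambda=0$. The only difference is that you spell out the two steps the paper leaves as assertions ("cannot be injective" and "clearly $\Lambda=0$"), which is a welcome but not substantively different elaboration.
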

\begin{proof}
    Let us start from the computation of $\mathrm{Hom}(SO(3),SU(2))$. Such a homomorphism cannot be injective, but the kernel can be only trivial or the whole $SO(3)$. Thus, there exists only one homomorphism, which is the trivial one.\\
    The moduli space formula \eqref{moduli} says us, independently by the specific space \[\mathcal{M}^{iso}=\left\{\Lambda\in \mathrm{Hom}_{\R}(\R^3,\su{2})\ |\ \Lambda\circ \lambda(h)=\Lambda\ \forall h\in SO(3)\right\}/SU(2).\]
    Using the equivariant isomorphism between $\su{2}$ and $\R^3$, we reduce to discuss the linear endomorphisms of $\R^3$ such that $\Lambda\circ R=\Lambda$ for all $R\in SO(3)$. Clearly, the only possible solution is $\Lambda=0$. Thus, the moduli space contains only the canonical connection.
\end{proof}

The isotropic case provides an opportunity to compare the two spaces with the standard treatment of cosmology in the Ashtekar-Barbero-Immirzi formulation. This formulation has been extensively studied in the context of Loop Quantum Cosmology, as it represents the starting point of the classical framework. In \cite{Ashtekar_Bojowald_Lewandowski_2003}, it is stated that the homogeneous and isotropic Ashtekar-Barbero-Immirzi-Sen connections are classified by a gauge-invariant parameter, which is simply a real number. Then, the moduli space of homogeneous and isotropic Ashtekar-Barbero-Immirzi-Sen connections is just the real line \cite{Brunnemann_Fleischhack_2012}. This classification aligns with the moduli space of homogeneous metric-compatible connections, which become spin connections upon fixing a spin structure, while it automatically excludes the $SU(2)$ connection interpretation, as the latter is represented by a single point.

\section{Conclusion}
We have classified both the homogeneous metric-compatible connections and the homogeneous $SU(2)$ connections on three-dimensional Riemannian homogeneous spaces. In particular, we computed the moduli space resulting from the action of the gauge group on the space of homogeneous connections, and we found that its structure depends only on the stabilizer subgroup of the homogeneous space. Furthermore, the moduli space of homogeneous metric-compatible connections exhibits favorable topological properties: it is a finite-dimensional topological manifold (possibly with boundary) and is contractible.

\medskip

This analysis leads to a significant conclusion: the correct interpretation of the homogeneous Ashtekar–Barbero–Immirzi–Sen connection is that of a homogeneous spin connection. Only in this setting does the resulting structure align with the established physical literature, especially in the isotropic case.

These features have direct consequences for the mathematical foundations of the quantum theory. Most notably, they guarantee the existence of a regular Borel measure and allow the use of the Riesz–Markov–Kakutani representation theorem, thereby facilitating a rigorous construction of gauge-invariant quantum states. Additionally, the contractibility of the moduli space implies the triviality of the fiber bundle $\mathcal{\A}^G \to \A^G/\mathcal{G}^G$, enabling a global treatment of the gauge-fixing procedure without encountering topological obstructions.

\section*{Acknowledgments}
The authors also gratefully acknowledge financial support from Sapienza Università di Roma within Progetto di Avvio alla Ricerca di Ateneo 2023 (grant no. AR1231884E22A3E0) and Progetto di ricerca 2023 (grant no. RM123188 F1F08F61). This work has been carried out under the auspices of the GNFM INdAM (Gruppo Nazionale per la Fisica Matematica-Istituto Nazionale di Alta Matematica), and within the framework of the activities for PRIN 2022AKRC5P “Interacting Quantum Systems: Topological Phenomena and Effective Theories”. Project PE0000023-NQSTI.

\bibliographystyle{unsrt} 
\bibliography{biblio}

\end{document}